\documentclass[%
 reprint,
superscriptaddress,
 amsmath,amssymb,
 aps,
prb,
floatfix,
]{revtex4-2}

\newcommand{\R}{\mathbb{R}}

\DeclareMathOperator{\rank}{rank}

\newcommand{\Xx}{\vec{X}}
\newcommand{\Yy}{\vec{Y}}

\renewcommand{\to}{\longrightarrow}
\renewcommand{\bar}{\overline}
\renewcommand{\hat}{\widehat}

\newenvironment{illustration}{\par\noindent\textit{Illustration.}\ }{\hfill$\square$\par}

\usepackage{hyperref}

\usepackage[T1]{fontenc}
\usepackage[utf8]{inputenc}
\usepackage{microtype}
\usepackage{lipsum}
\usepackage{soul}
\usepackage{amsmath}
\usepackage{amssymb}
\usepackage{amsfonts}
\usepackage{amsthm}
\usepackage{mathtools}
\usepackage{latexsym}
\usepackage{bbm}
\usepackage{braket}

\usepackage{algorithm}
\usepackage{algpseudocode}

\algblock{ParFor}{EndParFor}
\algnewcommand\algorithmicparfor{\textbf{parallel for}}
\algnewcommand\algorithmicpardo{\textbf{do}}
\algnewcommand\algorithmicendparfor{\textbf{end parallel for}}
\algrenewtext{ParFor}[1]{\algorithmicparfor\ #1\ \algorithmicpardo}
\algrenewtext{EndParFor}{\algorithmicendparfor}

\usepackage{graphicx}
\usepackage{tikz}
\usetikzlibrary{arrows.meta,calc,positioning,decorations.pathreplacing}

\usepackage{tikz}
\usepackage{pgfplots}
\usepackage{pgfplotstable}
\usepgfplotslibrary{groupplots}
\pgfplotsset{compat=1.18}

\definecolor{oiBlack}      {RGB}{  0,  0,  0}
\definecolor{oiOrange}     {RGB}{230,159,  0}
\definecolor{oiSkyBlue}    {RGB}{ 86,180,233}
\definecolor{oiBluishGreen}{RGB}{  0,158,115}
\definecolor{oiYellow}     {RGB}{240,228, 66}
\definecolor{oiBlue}       {RGB}{  0,114,178}
\definecolor{oiVermillion} {RGB}{213, 94,  0}
\definecolor{oiPurple}     {RGB}{204,121,167}

\makeatletter
\pgfplotsset{
  discard if not/.style 2 args={
    x filter/.append code={
      \edef\pgfplots@tempa{\thisrow{#1}}%
      \edef\pgfplots@tempb{#2}%
      \ifx\pgfplots@tempa\pgfplots@tempb\else
      \fi
    }
  }
}
\makeatother

\pgfplotsset{
  nqsaxis/.style={
    width               = 0.415\textwidth,
    height              = 0.315\textwidth,
    scale only axis,
    axis line style     = {black, line width=0.4pt},
    tick style          = {black, line width=0.4pt},
    tick align          = outside,
    tick pos            = left,
    major tick length   = 2.5pt,
    minor tick length   = 1.3pt,
    label style         = {font=\sffamily\small},
    tick label style    = {font=\sffamily\footnotesize},
    grid                = none,
    clip                = true,
    filter discard warning = false,
    unbounded coords    = discard,
  },
  nqspoint/.style={only marks, mark size=1.85pt, line width=0.45pt},
  nqspointopen/.style={only marks, mark size=2.20pt, line width=0.55pt},
  nqsline/.style={mark size=1.55pt, line width=0.55pt},
}

\usepackage[dvipsnames]{xcolor}
\usepackage{tcolorbox}

\usepackage[
]{hyperref}
\hypersetup{pdfpagemode=UseNone}

\newtheorem{theorem}{Theorem}
\newtheorem{lemma}{Lemma}

\theoremstyle{definition}
\newtheorem{definition}{Definition}

\theoremstyle{remark}

\begin{document}

\title{Symmetry Constraints Regularize Neural Quantum State Learning}

\author{Turbasu Chatterjee}
\thanks{These authors contributed equally to this work.}
\affiliation{Department of Electrical and Computer Engineering, North Carolina State University, Raleigh, North Carolina, USA}

\author{Manas Sajjan}
\thanks{These authors contributed equally to this work.}
\affiliation{National Center for Computational Sciences, Oak Ridge National Laboratory, Oak Ridge, Tennessee 37830, United States}

\author{Songbo Xie}
\affiliation{Department of Electrical and Computer Engineering, North Carolina State University, Raleigh, North Carolina, USA}

\author{Elliott Love}
\affiliation{Department of Mathematics, North Carolina State University, Raleigh, North Carolina, USA}

\author{Vinit Singh}
\affiliation{Department of Electrical and Computer Engineering, North Carolina State University, Raleigh, North Carolina, USA}

\author{Bojko N. Bakalov}
\email{bnbakalo@ncsu.edu}
\affiliation{Department of Mathematics, North Carolina State University, Raleigh, North Carolina, USA}
\author{Sabre Kais}
\email{skais@ncsu.edu}
\affiliation{Department of Electrical and Computer Engineering, North Carolina State University, Raleigh, North Carolina, USA}

\begin{abstract}
Neural quantum states (NQS) offer highly expressive variational wavefunctions, but their optimization is frequently bottlenecked by redundant parameters and poorly conditioned landscapes. We demonstrate that embedding Hamiltonian symmetries directly into the variational parameterization geometrically regularizes this learning problem. For Boltzmann-family NQS, we enforce symmetries by tying local Pauli-$Z$ generators along physical geometric orbits, analytically collapsing the trainable coefficient space prior to optimization. To quantify the resulting optimization geometry, we introduce a geometric metric built on the Jacobian and Hessian of the optimization landscape. This framework evaluates the fraction of the physically accessible state space that corresponds to high-quality, low-energy solutions. Evaluating our approach on transverse-field Ising (TFIM) and XXZ spin chains shows that symmetry compilation excises the vast majority of parameters while maintaining ground-state accuracy within the resolution of the reported benchmarks. In large TFIM systems, strong spatial constraints compress thousands of parameters down to tens, delivering substantial runtime accelerations. Our geometric diagnostics indicate that symmetry produces a more favorable target-aware geometry by concentrating the reachable state space around low-energy solutions while retaining broad target basins. Together, our results indicate that symmetry compilation concentrates the expressive power of NQS on states relevant to the target problem, thereby reducing model size and training cost without sacrificing accuracy.

\end{abstract}

\keywords{Suggested keywords}
\maketitle
\section{Introduction}
\label{sec:introduction}

Neural quantum states (NQS) provide a variational language for many-body wavefunctions~\cite{carleo2017solving,PhysRevX.7.021021,lange2024architecturesapplicationsreviewneural,vivas2022neuralnetworkquantumstatessystematic,medvidovic2024neural}, capable of capturing correlations that elude conventional tensor networks~\cite{sharir2022neural,choo2020fermionic,hibat2020recurrent,barrett2022autoregressive}. However, this expressivity introduces severe optimization challenges. Large, highly expressible NQS architectures are often plagued by problems closely related to barren plateaus studied in parameterized quantum circuits~\cite{mcclean2018barren,cerezo2021cost,holmes2022connecting,larocca2022diagnosing,larocca2024review,ragone2024lie}. These arise due to redundant gradient directions arising from weakly identifiable parameters~\cite{RBM_Geom_learning,PhysRevB.100.195125,mcclean2018barren,cerezo2021cost}, i.e., the same parameters that allow a model to be expressible, arrest its ability to converge quickly and minimize its local cost function.

For homogeneous models, symmetries, such as translations, point-group operations, or global spin inversions, offer a natural mechanism to prune this excess capacity~\cite{gard2020efficient,bravyi2017tapering,PhysRevResearch.3.013039,sauvage2024building,wiersema2024geometricquantummachinelearning,ragone2024lie,wiersema2024classification}. Numerical many-body methods enforce symmetries through Hilbert-space projection, symmetry-sector restriction, symmetry-preserving ansatze, or penalty constructions~\cite{bravyi2017tapering,gard2020efficient,PhysRevResearch.3.013039,choo2018,sauvage2024building}. These approaches can reduce the effective variational search space and improve optimization, motivating the following question: \textit{Beyond simply decreasing the number of parameters, in what specific ways does embedding symmetry into NQS models alter the structure of the optimization landscape and how can we quantify its impact using a metric that is both invariant under coordinate transformations and sensitive to the target state?}

In this paper, we address this by structurally embedding symmetries into Boltzmann-family NQS~\cite{salakhutdinov2009deep,salakhutdinov2010efficient,melko2019restricted,fischer2014training,amin2018quantum,PhysRevB.96.205152,RBM_anyons_symm,Topo_state_RBM,RBM_pruning,demidik2025expressiveequivalenceclassicalquantum} and studying its geometric loss landscape. By expanding the real amplitude and phase generators in local Pauli-$Z$ strings, the model is parameterized by coefficient vectors over a $k$-local support basis. Rather than penalizing asymmetry during optimization, we enforce invariance directly on these coefficient functions. By tying coefficients along physical geometric orbits, such as translations or full space-group representations, we collapse the exponentially large configuration space into symmetry-compatible eigenvalue classes prior to training, removing redundant parameter directions and leaving an optimization problem that is better conditioned from the start. This structural projection analytically dictates the parameter compression, linking dimensionality reduction directly to the intrinsic degeneracies of the target state.

However, naive parameter compression is an incomplete proxy for trainability~\cite{holmes2022connecting,nakaji2021expressibility,larocca2024review,jacot2018neural,lee2019wide,novak2019neural}. To quantify trainability and diagnose ill-conditioning, we compare the local volume of low-energy states with the total physical volume reachable by the ansatz. The Fubini--Study metric provides a coordinate-invariant measure of state-space volume~\cite{koczor2019quantum,haug2021capacity,beckey2022variational,meyer2021fisher,abbas2021effective,abbas2021power}, while the curvature of the energy landscape determines the leading geometry of the target basin. By evaluating the fraction of the physically accessible manifold occupied by these high-quality configurations, we show that we are able to isolate the genuine geometric regularization  under the hood that arises from coordinate shrinkage.

We test our hypotheses on the transverse-field Ising (TFIM) and XXZ spin chains~\cite{koffel2012entanglement,vitagliano2010volume,scheie2021detection}. Across both models, spatial and bitflip symmetry compilation structurally reduces the vast majority of parameters, delivering orders-of-magnitude wall-time speedups without sacrificing ground-state accuracy. The geometric diagnostics confirm the mechanism under the hood, i.e., unconstrained learners have redundant, weakly active tangent directions whereas symmetry imposition prunes auxiliary flat directions, maximizing the locally useful target volume per search direction.

Taken together, these results suggest that pruning NQS model parameters by imposing symmetry constraints via structural degeneracies regularizes the geometry of the learning dynamics. These methods therefore drive the model to obtain better gradient directions and furnish a highly scalable and highly accurate state for studying many-body physics.

\section{Theoretical Framework}

A wide variety of neural quantum state architectures have been proposed in the literature, including restricted Boltzmann machines, autoregressive models, convolutional networks, correlator-product states, and learned-coefficient parameterizations~\cite{carleo2017solving,hibat2020recurrent,barrett2022autoregressive,sharir2022neural,Reh2023}. While these architectures vary structurally, they address the identical underlying task: for each computational-basis configuration, learning the corresponding amplitude and phase for a quantum state. Rather than analyzing each machine learning architecture microscopically, we introduce a generalized framework under which our analysis holds universally.

\subsection{The spectrum of the NQS ansätze}

Let $\mathcal{X} = (\mathbb{C}^2)^{\otimes n}$ denote the $2^n$-dimensional complex Hilbert space of an $n$-qubit system, and let $\mathcal{B}_c = \{|\vec{z}\rangle \mid \vec{z} \in \{0,1\}^n\}$ be the standard computational basis representing classical spin configurations. Following the canonical neural quantum state (NQS) framework~\cite{lange2024architecturesapplicationsreviewneural, carleo2017solving}, a pure quantum state $|\psi\rangle \in \mathcal{X}$ can be compactly parameterized by expressing its wave function coefficients directly through a neural network or variational architecture. To isolate the modulation of the quantum probability distribution from the underlying phase structure, we adopt the canonical amplitude-phase representation~\cite{lange2024architecturesapplicationsreviewneural}:
\begin{equation}
    |\psi_{\Xx, \Yy}\rangle = \sum_{\vec{z}\in\{0,1\}^n} \sqrt{P_{\Xx}(\vec{z})} e^{-i \phi_{\Yy}(\vec{z})} |\vec{z}\rangle , \label{eq:gen_nqs_canonical}
\end{equation}
where $P_{\Xx}(\vec{z}) \ge 0$ represents the Born probability distribution satisfying $\sum_{\vec{z}} P_{\Xx}(\vec{z}) = 1$, and $\phi_{\Yy}(\vec{z}) \in \mathbb{R}$ denotes the relative phase field. The tunable parameter vectors are split into amplitude-network parameters $\Xx\in\mathbb R^d$ and phase-network parameters $\Yy\in\mathbb R^k$, and we assume that $d$ and $k$ grow at most polynomially with the system size.

To construct a flexible, physically interpretable framework that encompasses standard architectures found in the literature, we parameterize these fields using diagonal operator expansions acting on a reference state. Let $\mathcal{B} = \left\{\bigotimes_{j=1}^n \sigma_j^{a_j} \;\middle|\; a_j \in \{z, I\}\right\}$ denote the abelian group of diagonal Pauli strings. We define parameterized Hermitian generators $H(\Xx), G(\Yy) \in \operatorname{Herm}(\mathcal{X})$ restricted to this basis:
\begin{equation}
    H(\Xx) = \sum_{\alpha\in\mathcal{B}} c_\alpha(\Xx) P_\alpha, \quad G(\Yy) = \sum_{\alpha\in\mathcal{B}} d_\alpha(\Yy) P_\alpha,
\end{equation}
where $c_\alpha(\Xx), d_\alpha(\Yy) \in \mathbb{R}$ represent the output fields of the neural network architecture evaluated at parameters $\Xx$ and $\Yy$.

Since the elements of $\mathcal{B}$ are mutually commuting, diagonal operator strings, the computational basis states $|\vec{z}\rangle$ are simultaneous eigenstates of every $P_\alpha \in \mathcal{B}$, satisfying $P_\alpha |\vec{z}\rangle = p_\alpha(\vec{z})|\vec{z}\rangle$ with eigenvalues $p_\alpha(\vec{z}) \in \{-1, 1\}$. Consequently, the functions of these operators act diagonally on the computational basis, mapping the operator exponentials directly to scalar fields~\cite{odonnell2014analysis}.

The physical structure of the state $|\psi_{\Xx,\Yy}\rangle$ is defined by the factored ansatz \cite{sajjan2024polynomially,melko2019restricted}:
\begin{equation}
    |\psi_{\Xx,\Yy}\rangle = \mathcal{N}^{-1/2} e^{-\frac{\beta}{2}H(\Xx)} e^{-iG(\Yy)} \left(\bigotimes_{j=1}^n e^{-i\theta_j\sigma^x_j}\right) |\vec{0}\rangle,
\end{equation}
where $\beta \in \mathbb{R}$ physically represents a fixed inverse temperature scaling factor, and $|\vec{0}\rangle = |0\rangle^{\otimes n}$. The local $\sigma^x$ rotations seed a reference product-state superposition with baseline amplitudes $\alpha_{\vec{z}}(\vec{\theta}) \equiv \langle\vec{z}|\bigotimes_{j=1}^n e^{-i\theta_j\sigma^x_j}|\vec{0}\rangle$. The diagonal operators $e^{-\frac{\beta}{2}H(\Xx)}$ and $e^{-iG(\Yy)}$ modulate this reference structure without inducing basis mixing.

Projecting this ansatz onto the computational basis yields the real eigenvalue fields $\eta(\Xx,\vec{z}) \equiv \sum_{\alpha\in\mathcal{B}} c_{\alpha}(\Xx)p_{\alpha}(\vec{z})$ and $\gamma(\Yy,\vec{z}) \equiv \sum_{\alpha\in\mathcal{B}} d_{\alpha}(\Yy)p_{\alpha}(\vec{z})$. Consequently, the model is determined entirely by the coefficient vectors:
\begin{equation}
    \vec c(\Xx) = \left(c_{\alpha}(\Xx)\right)_{\alpha \in \mathcal B}, \quad \vec d(\Yy) = \left(d_{\alpha}(\Yy)\right)_{\alpha \in \mathcal B},
\end{equation}
which independently dictate the Born probabilities and relative phases via
\begin{align}
    P_{\Xx}(\vec{z}) = \frac{e^{-\beta\eta(\Xx,\vec{z})}|\alpha_{\vec{z}}(\vec{\theta})|^2}{\mathcal{N}}, \\ 
    \phi_{\Yy}(\vec{z}) = -\gamma(\Yy,\vec{z}) - \arg\alpha_{\vec{z}}(\vec{\theta}),
\end{align}
where $\mathcal{N} \equiv \sum_{\vec{z}} e^{-\beta\eta(\Xx,\vec{z})}|\alpha_{\vec{z}}(\vec{\theta})|^2$ is the partition function acting as the normalization constant.

\begin{figure*}[t]
    \centering
    \includegraphics[width=\textwidth]{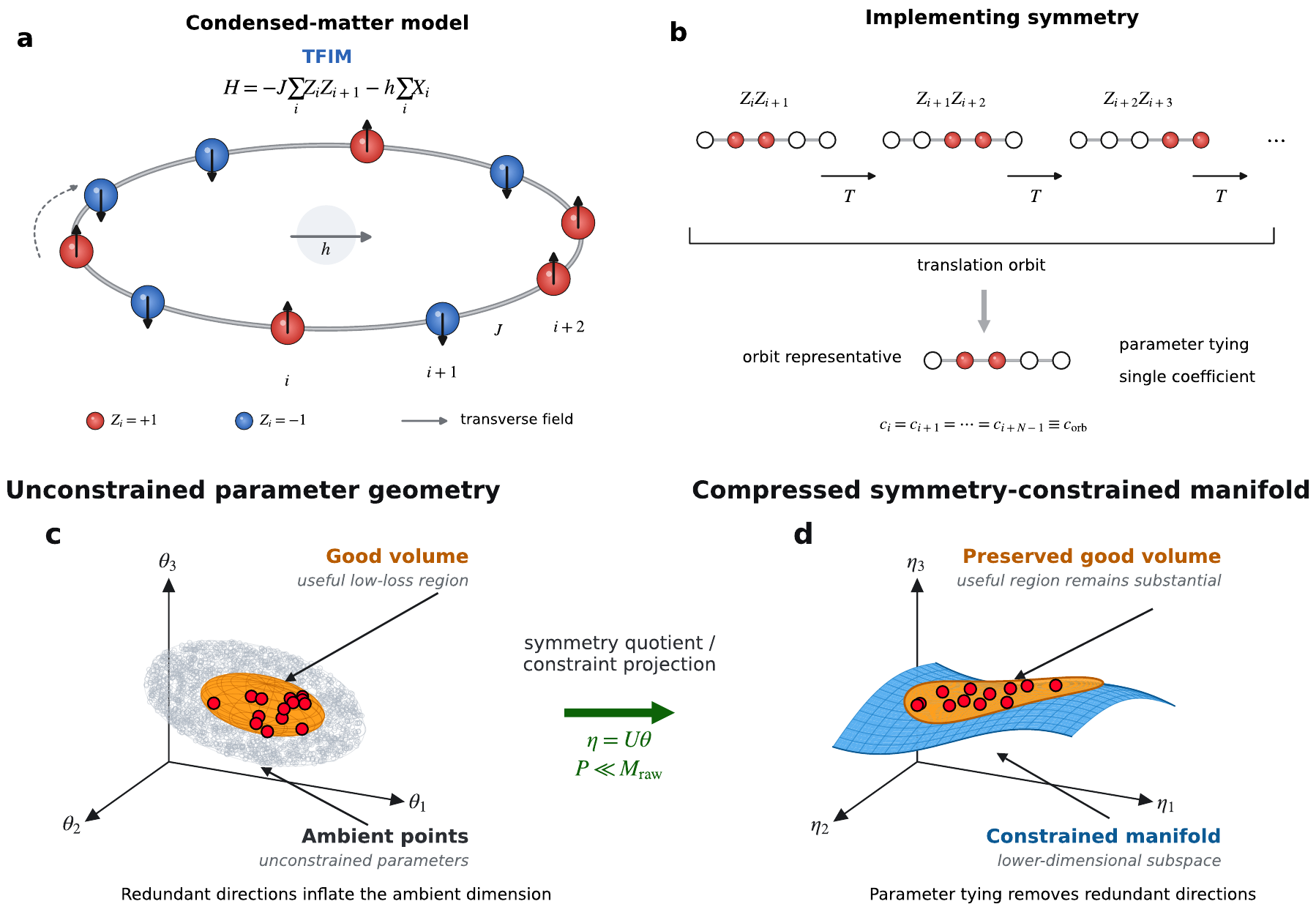}
    \caption{\textbf{Symmetry compresses the neural-quantum-state parameter manifold while preserving its useful region.} \textbf{a}, Periodic transverse-field Ising model (TFIM), $H=-J\sum_i Z_iZ_{i+1}-h\sum_iX_i$, with nearest-neighbour coupling $J$ and transverse field $h$. \textbf{b}, Lattice translations map each local interaction support to an orbit of symmetry-equivalent supports. Enforcing symmetric constraints by imposing translational invariance ties the corresponding coefficients, replacing the full set of translated parameters by a single orbit coefficient~\cite{choo2018, RBM_anyons_symm}. \textbf{c}, In the unconstrained parametrization, symmetry-equivalent descriptions generate redundant directions that enlarge the ambient parameter space; the green region denotes parameters satisfying a prescribed accuracy or energy criterion. \textbf{d}, Quotienting these redundant directions produces a lower-dimensional symmetry-constrained manifold. The image of the good region remains extended on the reduced manifold, illustrating that symmetry reduction can remove redundant degrees of freedom without collapsing the volume of accurate, trainable solutions.}
    \label{fig:symmetry-manifold-compression}
\end{figure*}

Given the universal expressivity of these fields over $\mathbb{Z}_2^n$~\cite{sajjan2024polynomially}, we variationally minimize $\langle f(\mathcal{H})\rangle_{\rho} = \mathbb{E}_{\vec{z}_1 \sim P(\vec{z}_1)} [f_{\mathrm{loc}}(\mathcal{H}, \vec{z}_1)]$ via Monte Carlo sampling~\cite{carleo2017solving,sajjan2024polynomially}.

For a $k$-local target Hamiltonian $\mathcal H$, the standard VMC local energy at a sampled configuration $\vec z_1$ is $E_{\mathrm{loc}}(\vec z_1)\equiv\sum_{\vec z_2}\langle\vec z_1|\mathcal H|\vec z_2\rangle\langle\vec z_2|\psi_{\Xx,\Yy}\rangle/\langle\vec z_1|\psi_{\Xx,\Yy}\rangle$~\cite{carleo2017solving,Reh2023}. For Pauli terms with support on at most $k$ sites, only configurations connected to $\vec z_1$ by flips on at most $k$ sites contribute to this sum. Nonetheless, unconstrained parameterization of $\vec c(\Xx)$ and $\vec d(\Yy)$ introduces redundant, weakly identifiable directions that generate flat landscapes and ill-condition the optimization~\cite{RBM_Geom_learning,PhysRevB.100.195125,dash2025efficiency,Reh2023}. To resolve this ill-conditioning, we systematically truncate the expansion basis $\mathcal{B}$. The following section details how imposing locality constraints on the diagonal generators enables direct eigenvalue engineering of the amplitude and phase fields, effectively restoring trainability.

\subsection{Eigenvalue engineering via symmetry constraints}
To regularize the variational parameterization, we first restrict the generators to a $k$-local Pauli basis and then impose symmetry constraints that identify symmetry-equivalent coefficients. 
While the locality restriction limits the operator basis, it does not by itself reduce the number of distinct eigenvalues.  Let $S_k = \{\vec{\alpha} \in \{0,1\}^n \mid \|\vec{\alpha}\|_1 \le k\}$ index the $k$-local operators, where $\|\cdot\|_1$ denotes the Hamming weight. The truncated generators are:
\begin{equation}
    H(\Xx) = \sum_{\vec{\alpha} \in S_k} c_{\vec{\alpha}}(\Xx) P_{\vec{\alpha}}, \quad G(\Yy) = \sum_{\vec{\alpha} \in S_k} d_{\vec{\alpha}}(\Yy) P_{\vec{\alpha}},
\end{equation}
where $P_{\vec{\alpha}} = \bigotimes_{j=1}^n (\sigma^z_j)^{\alpha_j}$. Limiting the distinct eigenvalues of $H(\Xx)$ and $G(\Yy)$ independently bounds the joint spectral degrees of freedom. By symmetry, it suffices to analyze the spectrum of $H(\Xx)$ mapped from its coefficient vector $\vec{c}(\Xx) \in \mathbb{R}^{M_k}$, where $M_k = |S_k| = \sum_{m=0}^k \binom{n}{m}$.

The spectrum of $H(\Xx)$ defines a linear map $\vec{\eta} = W\vec{c}$, where the design matrix $W \in \mathbb{R}^{2^n \times M_k}$ has elements $W_{\vec{z}, \vec{\alpha}} = (-1)^{\vec{\alpha} \cdot \vec{z}}$ for $\vec{z} \in \{0,1\}^n$. The columns of $W$ correspond to orthogonal characters of the abelian group $\mathbb{Z}_2^n$, forcing the matrix to satisfy $W^T W = 2^n \mathbb{I}_{M_k}$. Consequently, $\operatorname{rank}(W) = M_k$, ensuring the transformation from $k$-local coefficients to the eigenvalue spectrum is strictly injective.

To prescribe a degeneracy pattern, let $\{\mathcal C_j\}_{j=1}^{D}$ be a partition of the computational-basis configurations into $D$ collision classes, where $|\mathcal C_j|=m_j$ and $\sum_{j=1}^{D}m_j=2^n$. For each class $\mathcal C_j$, choose a representative configuration $\vec z^{(j,0)}\in\mathcal C_j$ and enumerate the remaining configurations as $\vec z^{(j,a)}$ for $a=1,\ldots,m_j-1$. Requiring every configuration in $\mathcal C_j$ to share the eigenvalue of its representative produces $\sum_{j=1}^{D}(m_j-1)=2^n-D$ homogeneous linear constraints on $\vec c$.

These constraints are collected in the configuration-space collision matrix $V_{\text{coll}}\in\mathbb R^{(2^n-D)\times M_k}$, whose entries are given by:
\begin{align}
    (V_{\text{coll}})_{(j,a),\vec\alpha} &= W_{\vec z^{(j,0)},\vec\alpha}-W_{\vec z^{(j,a)},\vec\alpha} \nonumber \\
    &=(-1)^{\vec\alpha\cdot\vec z^{(j,0)}}-(-1)^{\vec\alpha\cdot\vec z^{(j,a)}}, \label{eq:collision-matrix}
\end{align}
where $j=1,\ldots,D$ and $a=1,\ldots,m_j-1$. Since $\eta(\vec c,\vec z)=\sum_{\vec\alpha\in S_k}W_{\vec z,\vec\alpha}c_{\vec\alpha}$, each component of the constraint vector maps directly to an eigenvalue difference:
\begin{equation}
    (V_{\text{coll}}\vec c)_{(j,a)}=\eta(\vec c,\vec z^{(j,0)})-\eta(\vec c,\vec z^{(j,a)}).
\end{equation}
Therefore, $V_{\text{coll}}\vec c=\vec 0$ is equivalent to requiring that all configurations within each class $\mathcal C_j$ share a common eigenvalue. The resulting spectrum contains at most $D$ distinct eigenvalues because configurations assigned to different classes may still acquire the same eigenvalue.

To construct this allowable parameter subspace without evaluating $V_{\text{coll}}$ via an explicit, computationally intractable $\mathcal{O}(2^n)$ state-space enumeration, we exploit the underlying discrete spatial and internal symmetry groups $\mathcal{G}$ of the lattice system~\cite{choo2018,RBM_anyons_symm,sauvage2024building}. A spatial symmetry operation $g \in \mathcal{G}$ acts as a site permutation, inducing a linear representation on the $k$-local coefficient basis. The corresponding permutation matrix $P_g \equiv \mathcal{P}(g) \in \mathbb{R}^{M_k \times M_k}$ acts directly on the coefficients via $(P_g \vec{c})_{\alpha} = c_{g^{-1}(\alpha)}$. Physical invariance of the generator requires $(\mathbb{I}_{M_k} - P_g)\vec{c} = \vec{0}$ for all $g \in \mathcal{G}$. The global constraint matrix $V$ can be constructed systematically by serially evaluating a minimal generating set $\{g_1, g_2, \dots, g_m\} \subseteq \mathcal{G}$ alongside any internal parity operators:

\begin{enumerate}
    \item \emph{Lattice Translations ($C_n$):} For a 1D periodic chain generated by the single-site cyclic translation $\mathsf{T}$, the action on the basis is $P_{\mathsf{T}\vec{\alpha}} = \mathsf{T}P_{\vec{\alpha}}\mathsf{T}^{-1}$, forcing $c_{\vec{\alpha}} = c_{\mathsf{T}\vec{\alpha}}$. This yields the constraint block $V_T = \mathbb{I}_{M_k} - P_T$, which compresses the parameter space via Burnside's lemma to $\mathcal{G}$-orbits and maps the $2^n$ computational states onto $D_{\lambda}(C_n) = \frac{1}{n}\sum_{r=0}^{n-1} 2^{\gcd(n,r)}$ degenerate eigenvalue necklace orbits~\cite{burnside1911,redfield1927,polya1937,harary_palmer1973}.
    
    \item \emph{Full Space Groups ($D_n$):} Adding spatial reflection $\mathsf{R}$ (where $\mathsf{R}^2=\mathbb{I}$ and $\mathsf{R}\mathsf{T}\mathsf{R}=\mathsf{T}^{-1}$) accounts for orientation reversal, mapping crystal momentum $q \to -q$ and folding the Brillouin zone. The space-group constraint is captured by introducing the block $V_R = \mathbb{I}_{M_k} - P_R$, enforcing $c_{\vec{\alpha}} = c_{\mathsf{R}\vec{\alpha}}$. This collapses the eigenvalues into $D_{\lambda}(D_n) = \frac{1}{2n} \left( \sum_{r=0}^{n-1} 2^{\gcd(n,r)} + \sum_{r=0}^{n-1} 2^{c(\mathsf{R}\mathsf{T}^r)} \right)$ binary bracelet classes, where $c(\mathsf{R}\mathsf{T}^r)$ is the cycle count of the reflected translation~\cite{polya1937,harary_palmer1973}.
    
    \item \emph{Point-Group Symmetries ($\mathcal{P}$):} For higher-dimensional lattices, operations $R \in \mathcal{P}$ leave a designated origin fixed while permuting site indices (e.g., $D_4$ rotations and reflections on a 2D square lattice). Generating blocks of the form $V_R = \mathbb{I}_{M_k} - P_R$ for a minimal generating set of $\mathcal{P}$ eliminates artificial bond anisotropy (forcing $c_{\hat{x}} = c_{\hat{y}}$) and restricts the spectral degrees of freedom to $D_{\lambda}(\mathcal{P}) = \frac{1}{|\mathcal{P}|} \sum_{R \in \mathcal{P}} 2^{c(R)}$ orbit classes~\cite{polya1937,harary_palmer1973}.
    
    \item \emph{Global Bitflip Parity ($\mathsf{B}$):} Unlike spatial mappings, the internal spin-flip operator $\mathsf{B} = \prod_{i=1}^n X_i$ acts as an algebraic selection rule. Because $\mathsf{B}P_{\vec{\alpha}}\mathsf{B}^{-1} = (-1)^{|\vec{\alpha}|}P_{\vec{\alpha}}$, invariance under $\mathsf{B}$ forces $c_{\vec{\alpha}} = 0$ for all odd Hamming weights $|\vec{\alpha}|$. This eliminates the entire odd-support sector via a diagonal projection block $V_B = \operatorname{diag}(\vec{v})$, where $v_{\alpha} = 1$ if $|\vec{\alpha}| \equiv 1 \pmod 2$ and $0$ otherwise. Because $\mathsf{B}$ has no fixed points on $\mathcal{B}_c$, it uniformly splits the Hilbert space into $2^{n-1}$ degenerate doublets $\{\vec{z}, \vec{z}\oplus\vec{1}\}$~\cite{choo2018,RBM_anyons_symm}.
\end{enumerate}

Vertically stacking these operational blocks yields the complete global constraint matrix
\begin{equation}
    V = \begin{bmatrix} \mathbb{I}_{M_k} - P_{g_1} \\ \vdots \\ \mathbb{I}_{M_k} - P_{g_m} \\ V_B \end{bmatrix} \in \R^{(m+1)M_k \times M_k}.
\end{equation}

The subspace of physically allowable, symmetry-invariant coefficients is precisely defined by the null space $\text{Null}(V) \subseteq \mathbb{R}^{M_k}$. Let $U_V \in \mathbb{R}^{M_k \times d_{\text{null}}}$ be an isometric matrix whose columns form an orthonormal basis spanning $\text{Null}(V)$, where $d_{\text{null}} = \dim(\text{Null}(V))$. Reparameterizing the coefficient vector via $\vec{c} = U_V \vec{\xi}$ for an unconstrained coordinate vector $\vec{\xi} \in \mathbb{R}^{d_{\text{null}}}$ directly simplifies the spectral mapping to $\vec{\eta} = W_{\text{red}} \vec{\xi}$. Here, the reduced design matrix is explicitly given by
\begin{equation}
    W_{\text{red}} = W U_V ,
\end{equation}
which maps the unconstrained variational coordinates directly onto the allowable eigenvalue subspace. This construction restricts the optimization trajectory strictly to the symmetry-preserving tangent space, mathematically excising redundant, ill-conditioned directions and regularizing the geometric capacity of the network~\cite{RBM_Geom_learning,dash2025efficiency,sauvage2024building}. Figure~\ref{fig:symmetry-manifold-compression} summarizes the central mechanism wherein spatial symmetry identifies lattice-shifted interaction supports and ties their coefficients, thereby quotienting symmetry-redundant directions from the variational parameter space while retaining an extended region of accurate solutions on the compressed manifold.

\section{Expressibility and Trainability of an NQS}
\label{sec:trainability_nqs}

Beyond merely reducing the number of free variables, incorporating physical symmetries directly into the ansatz can regularize the optimization landscape by restricting the variational search to symmetry-compatible directions~\cite{choo2018,sauvage2024building}. Our central contribution is a target-aware geometric metric that distinguishes raw parameter count, physically accessible variational capacity, and the fraction of that capacity concentrated near accurate solutions. The trainability of a variational quantum ansatz is governed by the geometry and conditioning of its parameter manifold~\cite{RBM_Geom_learning,haug2021capacity,dash2025efficiency}. The Jacobian defines the accessible tangent space and the pullback quantum geometric tensor, whose real part yields the Fubini--Study metric and is proportional, up to convention, to the pure-state Quantum Fisher Information Matrix~\cite{provost1980riemannian,stokes2020quantum}. The objective Hessian, in turn, determines the local curvature of the energy landscape. Here, we explicitly derive these geometric quantities for the generalized NQS model. By comparing the unconstrained landscape with the symmetry-projected subspace, we demonstrate that hard-wiring the physical invariants excises symmetry-forbidden tangent directions and their associated structural zero modes, thereby regularizing the physically relevant optimization landscape. Proofs are deferred to Appendix~\ref{app:jacobian_proofs}.

\subsection{Tangent space geometry of the variational manifold- Local notion of expressibility}

Using the diagonal operators $H(\Xx)$ and $G(\Yy)$ defined above, their computational-basis eigenvalues are $\eta(\Xx,\vec z)=\sum_{\alpha\in\mathcal B}c_\alpha(\Xx)p_\alpha(\vec z)$ and $\gamma(\Yy,\vec z)=\sum_{\alpha\in\mathcal B}d_\alpha(\Yy)p_\alpha(\vec z)$. The resulting NQS amplitude is
\begin{equation}
\psi_{\Xx,\Yy}(\vec z)=\frac{e^{-\frac{\beta}{2}\eta(\Xx,\vec z)}e^{-i\gamma(\Yy,\vec z)}\alpha_{\vec z}(\vec\theta)}{\sqrt{\mathcal N(\Xx)}},
\end{equation}
with $\mathcal N(\Xx)=\sum_{\vec z}e^{-\beta\eta(\Xx,\vec z)}|\alpha_{\vec z}(\vec\theta)|^2$. Consequently, the Jacobian may be defined as follows.

\begin{lemma}[Unconstrained Jacobian]
\label{lem:uncon_jacobian}
For $\vec\theta_{\rm un}=(\vec c(\Xx),\vec d(\Yy))\in\mathbb R^{2|\mathcal B|}$, the Jacobian $J_{\rm uncon}=(J^{(c)}, J^{(d)})\in\mathbb C^{2^n\times 2|\mathcal B|}$ has matrix elements
\begin{align}
\label{eq:uncon-jacobian}
J^{(c)}(\vec z,\beta)=\frac{\partial \psi_{{\Xx,\Yy}}(\vec z)}{\partial c_\beta(\Xx)}&=-\frac{\beta}{2}\psi_{\Xx,\Yy}(\vec z)\left[p_\beta(\vec z)-\left\langle p_\beta\right\rangle_\psi\right],\nonumber\\
J^{(d)}(\vec z,\beta)=\frac{\partial \psi_{{\Xx,\Yy}}(\vec z)}{\partial d_\beta(\Yy)}&=-ip_\beta(\vec z)\psi_{\Xx,\Yy}(\vec z).
\end{align}
where $\langle\cdot\rangle_\psi$ is the expectation over the Born distribution $|\psi_{\Xx,\Yy}(\vec z)|^2$.
\end{lemma}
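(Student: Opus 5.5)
The plan is a direct differentiation, treating the coefficients $c_\beta$ and $d_\beta$ as independent real coordinates and $\vec\theta$ as fixed. Write the amplitude in log form,
\begin{equation*}
\log\psi_{\Xx,\Yy}(\vec z)=-\frac{\beta}{2}\eta(\vec z)-i\gamma(\vec z)+\log\alpha_{\vec z}(\vec\theta)-\frac12\log\mathcal N .
\end{equation*}
Then use $\partial\psi=\psi\,\partial\log\psi$, which is valid wherever $\psi(\vec z)\neq0$. Where $\psi(\vec z)=0$, the factor $\alpha_{\vec z}(\vec\theta)$ vanishes identically in $\vec c,\vec d$, so both sides of the claimed formulas are zero. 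Since $\eta$ and $\gamma$ are linear in the coefficients, $\partial\eta(\vec z)/\partial c_\beta=p_\beta(\vec z)$ and $\partial\gamma(\vec z)/\partial d_\beta=p_\beta(\vec z)$, while the cross derivatives vanish.

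\textbf{Phase block.} Only the term $-i\gamma$ depends on $d_\beta$, because $\mathcal N$ contains $|e^{-i\gamma}|^2=1$. This immediately gives $J^{(d)}(\vec z,\beta)=-ip_\beta(\vec z)\psi_{\Xx,\Yy}(\vec z)$.

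\textbf{Amplitude block.} There are two contributions to $\partial_{c_\beta}\log\psi$. The first is $-\frac{\beta}{2}p_\beta(\vec z)$ from the exponent. The second comes from the normalization:
\begin{equation*}
-\frac12\,\partial_{c_\beta}\log\mathcal N=-\frac{1}{2\mathcal N}\sum_{\vec z'}(-\beta)p_\beta(\vec z')e^{-\beta\eta(\vec z')}|\alpha_{\vec z'}|^2=\frac{\beta}{2}\sum_{\vec z'}p_\beta(\vec z')|\psi(\vec z')|^2=\frac{\beta}{2}\langle p_\beta\rangle_\psi .
\end{equation*}
Here the identity $|\psi(\vec z')|^2=e^{-\beta\eta(\vec z')}|\alpha_{\vec z'}|^2/\mathcal N$ identifies the weighted sum with the Born expectation. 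Adding the two contributions gives $-\frac{\beta}{2}\bigl[p_\beta(\vec z)-\langle p_\beta\rangle_\psi\bigr]$, and multiplying by $\psi$ yields the stated $J^{(c)}$.

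The only step needing care is this normalization derivative. One must recognize the sum as an expectation with respect to $|\psi|^2$, and note that the phase drops out of $\mathcal N$, which is why $J^{(d)}$ carries no centering term.

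There is also a notational point. The index $\beta$ labels a Pauli string and is distinct from the inverse temperature $\beta$, so I would state explicitly which is meant in each place. With that settled, everything else is routine chain rule.
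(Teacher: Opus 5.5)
Your proposal is correct and follows essentially the same route as the paper: direct differentiation of the amplitude, recognizing the derivative of $\mathcal N$ as $\frac{\beta}{2}\langle p_\beta\rangle_\psi$ and noting that the phase drops out of $\mathcal N$; your log-derivative form is just a repackaging of the paper's product-rule computation. Your version is, if anything, slightly more faithful to the ansatz as defined: you keep the reference amplitudes $\alpha_{\vec z}(\vec\theta)$ and handle their zeros, whereas the appendix proof silently drops them.
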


To enforce physical symmetries, we restrict $\vec{\tilde c}(\Xx,\Yy)=\vec c(\Xx)+i\vec d(\Yy)\in\mathbb C^{|\mathcal B|}$ to the null space of a linear constraint $V\vec{\tilde c}(\Xx,\Yy)=\vec 0$. Parameterizing this via a null-space basis $U\in\mathbb C^{|\mathcal B|\times d_\xi}$ yields $\vec{\tilde c}(\Xx,\Yy)=U\vec\xi$ for $\vec\xi\in\mathbb C^{d_\xi}$. Writing $\xi_a=\xi_a^R+i\xi_a^I$, $U_{\alpha a}=U_{\alpha a}^R+iU_{\alpha a}^I$, and defining the projected coefficients $A_{\vec z,a}=A_{\vec z,a}^R+iA_{\vec z,a}^I\equiv\sum_{\alpha\in\mathcal B}p_\alpha(\vec z)U_{\alpha a}$, the constrained eigenvalue fields become
\begin{align}
\eta(\vec\xi,\vec z)&=\sum_{a=1}^{d_\xi}\left(A_{\vec z,a}^R\xi_a^R-A_{\vec z,a}^I\xi_a^I\right),\\
\gamma(\vec\xi,\vec z)&=\sum_{a=1}^{d_\xi}\left(A_{\vec z,a}^R\xi_a^I+A_{\vec z,a}^I\xi_a^R\right).
\end{align}

{Subsequently, the Jacobian for the spectral-constrained models is given as follows.}
\begin{lemma}[Constrained Jacobian]
\label{lem:con_jacobian}
For $\vec\theta_{\rm con}=(\vec\xi^R,\vec\xi^I)\in\mathbb R^{2d_\xi}$, the Jacobian $J_{\rm con}=(J^{(R)}\;J^{(I)})\in\mathbb C^{2^n\times 2d_\xi}$ has matrix elements
\begin{align}
\label{eq}
J^{(R)}{\vec z,a}=\frac{\partial\psi_{\vec\xi}(\vec z)}{\partial\xi_a^R}&=-\psi_{\vec\xi}(\vec z)\left[\frac{\beta}{2}\left(A_{\vec z,a}^R-\langle A_a^R\rangle_\psi\right)+iA_{\vec z,a}^I\right],\nonumber\\
J^{(I)}{\vec z,a}=\frac{\partial\psi_{\vec\xi}(\vec z)}{\partial\xi_a^I}&=-\psi_{\vec\xi}(\vec z)\left[-\frac{\beta}{2}\left(A_{\vec z,a}^I-\langle A_a^I\rangle_\psi\right)+iA_{\vec z,a}^R\right].
\end{align}

\end{lemma}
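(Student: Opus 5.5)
We differentiate the explicit amplitude formula directly, using the chain rule and the linear dependence of the eigenvalue fields on $\vec\xi$. With the constraint imposed, the amplitude is
\[
\psi_{\vec\xi}(\vec z)=\mathcal N(\vec\xi)^{-1/2}\, e^{-\frac{\beta}{2}\eta(\vec\xi,\vec z)}\, e^{-i\gamma(\vec\xi,\vec z)}\,\alpha_{\vec z}(\vec\theta),
\qquad
\mathcal N(\vec\xi)=\sum_{\vec z}e^{-\beta\eta(\vec\xi,\vec z)}|\alpha_{\vec z}(\vec\theta)|^2 .
\]
The cleanest route is to compute the logarithmic derivative $\partial_\mu\log\psi_{\vec\xi}(\vec z)$ for $\mu\in\{\xi_a^R,\xi_a^I\}$ and then multiply by $\psi_{\vec\xi}(\vec z)$. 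This is legitimate wherever $\psi_{\vec\xi}(\vec z)\neq0$. Where $\alpha_{\vec z}(\vec\theta)=0$, both sides vanish identically in $\vec\xi$, so the formula holds trivially there.

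The log-derivative has three contributions. The first is $-\frac{\beta}{2}\partial_\mu\eta(\vec\xi,\vec z)$. The second is $-i\,\partial_\mu\gamma(\vec\xi,\vec z)$. The third is $-\frac12\partial_\mu\log\mathcal N$.

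From the displayed linear forms for $\eta$ and $\gamma$, the partial derivatives are
\[
\partial_{\xi_a^R}\eta=A^R_{\vec z,a},\quad
\partial_{\xi_a^I}\eta=-A^I_{\vec z,a},\quad
\partial_{\xi_a^R}\gamma=A^I_{\vec z,a},\quad
\partial_{\xi_a^I}\gamma=A^R_{\vec z,a}.
\]
Here $A$ is independent of $\vec\xi$ because $U$ is fixed. For the normalization term, only the amplitude exponent depends on $\vec\xi$, since $|e^{-i\gamma}|=1$. Hence
\[
\partial_\mu\log\mathcal N=-\beta\,\mathcal N^{-1}\sum_{\vec z'}e^{-\beta\eta(\vec\xi,\vec z')}|\alpha_{\vec z'}|^2\,\partial_\mu\eta(\vec\xi,\vec z')=-\beta\,\langle\partial_\mu\eta\rangle_\psi .
\]
This uses that the Born weights are exactly $|\psi_{\vec\xi}(\vec z')|^2=e^{-\beta\eta}|\alpha_{\vec z'}|^2/\mathcal N$. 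It follows that $-\frac12\partial_\mu\log\mathcal N=\frac{\beta}{2}\langle\partial_\mu\eta\rangle_\psi$. The normalization therefore centers the amplitude term, while the phase term has no centering counterpart.

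Assembling the pieces for $\mu=\xi_a^R$ gives
\[
\partial_\mu\log\psi=-\tfrac{\beta}{2}\bigl(A^R_{\vec z,a}-\langle A^R_a\rangle_\psi\bigr)-iA^I_{\vec z,a}.
\]
For $\mu=\xi_a^I$ it gives
\[
\partial_\mu\log\psi=\tfrac{\beta}{2}\bigl(A^I_{\vec z,a}-\langle A^I_a\rangle_\psi\bigr)-iA^R_{\vec z,a}.
\]
Factoring out $-\psi_{\vec\xi}(\vec z)$ reproduces both lines of the statement.

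The main point requiring care is the normalization derivative, in particular checking that the expectation is taken with respect to $|\psi|^2$ and that the phase generator drops out of $\mathcal N$. A second point is the sign bookkeeping from $\tilde c=c+i d=U\vec\xi$. This bookkeeping is what produces the cross terms $A^I\xi^R$ in $\gamma$ and $-A^I\xi^I$ in $\eta$, and hence the sign flip in the $\xi^I$ row. Everything else is routine. As a consistency check, setting $U=\mathbb I$ (real) gives $A^I=0$, and the $\xi^R$ and $\xi^I$ columns reduce to $J^{(c)}$ and $J^{(d)}$ of Lemma~\ref{lem:uncon_jacobian}, matching the unconstrained case.
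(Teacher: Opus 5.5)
Your proof is correct and takes essentially the same route as the paper. The paper first derives the general identity $\partial_\theta\psi=-\psi\bigl[\tfrac{\beta}{2}(\partial_\theta\eta-\langle\partial_\theta\eta\rangle_\psi)+i\,\partial_\theta\gamma\bigr]$ by the product rule, using $\mathcal N^{-1}\partial_\theta\mathcal N=-\beta\langle\partial_\theta\eta\rangle_\psi$, and then substitutes the same four partial derivatives you list; your log-derivative computation is this argument in different packaging, and you are slightly more careful than the appendix in keeping $\alpha_{\vec z}(\vec\theta)$ in $\psi$ and $\mathcal N$ and in treating configurations where $\psi_{\vec\xi}(\vec z)=0$.
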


{
Mathematically, the columns of the Jacobian $J$ span the tangent space $T_{\psi}\mathcal{M}$ of the parameter manifold, representing the directional derivatives of the state representation. To best appreciate this point, consider an infinitesimal variation of the real parameter vector \(d\vec{\theta}\), where $(\vec{\theta}=(\vec{c},\vec{d})$ for the unconstrained ansatz or $\vec{\theta}=(\vec{\xi}^{\,R},\vec{\xi}^{\,I})$ for the symmetry-constrained ansatz). To first order, the corresponding variation of the NQS wavefunction is $d|\psi(\vec{\theta})\rangle
=J (\vec{\theta})\,d\vec{\theta}$ where $J(\vec{\theta})$ can either be $J_{\rm{uncon}}$ or $J_{\rm{con}}$ depending on whether we are considering the unconstrained or the constrained case (see Lemma \ref{lem:uncon_jacobian} or Lemma \ref{lem:con_jacobian}). This infinitesimal displacement can contain a component parallel to the current state, such as a change in the overall global phase, which does not correspond to a physically distinguishable change of the quantum state. To remove such gauge-like contributions, we project the displacement onto the subspace orthogonal to $|\psi(\vec{\theta})\rangle$:
\begin{align}
d|\psi_{\perp}(\vec{\theta})\rangle &=\Pi_{\perp}J(\vec{\theta})\,d\vec{\theta}\\
&=\left(\mathbb{I}-|\psi(\vec{\theta})\rangle\langle\psi(\vec{\theta})|\right)J(\vec{\theta})d\vec{\theta}.
\end{align}
The squared norm of this physically distinguishable displacement is therefore
\begin{eqnarray}\label{eq:sq_norm_S_metric}
\left|d|\psi_{\perp}(\vec{\theta})\rangle\right|^2=d\vec{\theta}^{\,T}J^\dagger(\vec{\theta})\Pi_{\perp}J(\vec{\theta})d\vec{\theta}.
\end{eqnarray}
For real parameter variations, this induces the pullback Fubini--Study metric on the variational parameter manifold~\cite{provost1980riemannian,stokes2020quantum,dash2025efficiency},
\begin{eqnarray}\label{eq:defn_S_metric}
S(\vec{\theta})=\operatorname{Re}\left[J^\dagger(\vec{\theta})\Pi_{\perp}J(\vec{\theta})\right].
\end{eqnarray}
The metric $S(\vec{\theta})$ therefore quantifies how sensitively the physical quantum state changes under infinitesimal variations of the variational parameters. 

Rank deficiency of $\Pi_{\perp}J$, and hence of $S(\vec{\theta})$, identifies null parameter directions that produce no first-order displacement of the physical state. Exact null directions correspond to locally redundant or gauge-like parameter variations that produce no physically distinguishable first-order state displacement, while near-null directions signal weak local identifiability~\cite{haug2021capacity,dash2025efficiency}. Consequently, an important quantity characterizing the physical local expressive dimension is the rank~\cite{haug2021capacity,dash2025efficiency}
\begin{eqnarray}
r_{\mathrm{uncon/con}}(\vec{\theta}_{\mathrm{un/con}})=\rank S_{\mathrm{uncon/con}}(\vec{\theta}_{\mathrm{un/con}}),
\end{eqnarray}
which defines the number of linearly independent and physically distinguishable tangent directions accessible at $|\psi(\vec{\theta})\rangle$ in either the unconstrained or constrained case. It necessarily satisfies

\begin{eqnarray}
r_{\mathrm{uncon/con}}(\vec{\theta})
\leq
\min\!\left\{
|\vec{\theta}|,
2^{n+1}-2
\right\},
\end{eqnarray}
where $2^{n+1}-2$ is the real dimension of the projective Hilbert space of normalized $n$-qubit pure states. This tangent-space geometry provides the differential-geometric foundation for the energy-landscape curvature further analyzed in the following sections.}
%

\subsection{Geometry of the optimization landscape- curvature of the objective function}

{
The previous section established the geometry of the variational manifold of the defined NQS through the Jacobian and its relation to the Fubini-Study metric. The Jacobian identifies
physically accessible tangent directions of the ansatz and characterizes its local expressive capacity. However, expressibility alone does not determine how easily the ansatz can be optimized, because highly expressive models may still exhibit poorly scaled gradients or unfavorable parameter-space geometry~\cite{holmes2022connecting,haug2021capacity}. This information is thus encoded within the geometry of the objective function and hence demands a target-aware analysis. Understanding the local curvature of the objective function can provide a glimpse of the near-flat directions and high-curvature valleys drive landscape ill-conditioning, which severely throttles optimization step sizes. In variational quantum algorithms, this ill-conditioning frequently arises from over-parameterization. Once the number of coordinates exceeds the maximal dimension of the physically accessible state-space directions, the ranks of the QFIM and objective Hessian saturate, and additional parameters become locally redundant~\cite{larocca2023overparametrization,haug2021capacity}. This rank saturation can improve the global structure of the optimization landscape while simultaneously rendering raw coordinate-space metric or Hessian inversions singular or ill-conditioned. The relevant issue for local optimization is therefore not parameter count alone, but the resolved spectrum on the physically active tangent subspace. In the present construction, symmetry projection restricts the parameterization to the constraint-satisfying subspace and removes symmetry-forbidden directions before optimization.}

\begin{lemma}[Energy gradient and Hessian]
\label{lem:hessian}
Let $E(\vec\theta) = \langle\psi(\vec\theta)|\mathcal H|\psi(\vec\theta)\rangle$ for Hermitian $\mathcal H$, with $\vec\theta \in \{\vec\theta_{\rm un},\vec\theta_{\rm con}\}$ denoting the unconstrained and constrained parameters. The landscape derivatives are:
\begin{align}
\frac{\partial E}{\partial\theta_\mu} &= 2\,\mathrm{Re}\big[\langle\partial_\mu\psi|\mathcal H|\psi\rangle\big], \\
\frac{\partial^2 E}{\partial\theta_\mu\partial\theta_\nu} &= 2\,\mathrm{Re}\big[\langle\partial_\mu\psi|\mathcal H|\partial_\nu\psi\rangle + \langle\partial_\mu\partial_\nu\psi|\mathcal H|\psi\rangle\big]\label{eq:grad_hess}.
\end{align}
In matrix form, the Hessian is
\begin{align}
\label{eq:hessian_structure}
\nabla^2_{\vec\theta}E &= 2\,\mathrm{Re}\big[J^\dagger\mathcal H J + R\big],\\
R_{\mu\nu} &= \langle\partial_\mu\partial_\nu\psi|\mathcal H|\psi\rangle,
\end{align}
where $J\in\{J_{\rm uncon}, J_{\rm con}\}$ denotes the unconstrained and the constrained Jacobian respectively.
\end{lemma}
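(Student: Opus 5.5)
The plan is a direct differentiation of $E(\vec\theta)=\langle\psi(\vec\theta)|\mathcal H|\psi(\vec\theta)\rangle$, treated as a real-valued function of real parameters. The same argument covers both $\vec\theta_{\rm un}$ and $\vec\theta_{\rm con}$. The only inputs are smoothness of $\vec\theta\mapsto\psi(\vec\theta)$ and Hermiticity of $\mathcal H$.

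Smoothness is clear from the explicit form of $\psi_{\Xx,\Yy}(\vec z)$ given before Lemma~\ref{lem:uncon_jacobian}. It is a composition of exponentials of linear functions of the coefficients with the normalization $\mathcal N^{-1/2}$, and $\mathcal N>0$. The normalization factor is already absorbed into $\psi$, so $E$ is simply the quadratic form and no quotient rule is needed.

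\textbf{Gradient.} Write $E=\sum_{\vec z,\vec z'}\overline{\psi(\vec z)}\,\mathcal H_{\vec z\vec z'}\,\psi(\vec z')$. Because $\theta_\mu$ is real, $\partial_\mu\overline{\psi}=\overline{\partial_\mu\psi}$. The product rule then gives
\[
\partial_\mu E=\langle\partial_\mu\psi|\mathcal H|\psi\rangle+\langle\psi|\mathcal H|\partial_\mu\psi\rangle .
\]
Hermiticity gives $\langle\psi|\mathcal H|\partial_\mu\psi\rangle=\overline{\langle\partial_\mu\psi|\mathcal H|\psi\rangle}$. The sum of a complex number and its conjugate is twice its real part, which yields the first formula.

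\textbf{Hessian.} Differentiate $2\,\mathrm{Re}\,\langle\partial_\mu\psi|\mathcal H|\psi\rangle$ with respect to $\theta_\nu$. Since $\mathrm{Re}$ commutes with real derivatives, this gives
\[
2\,\mathrm{Re}\bigl[\langle\partial_\nu\partial_\mu\psi|\mathcal H|\psi\rangle+\langle\partial_\mu\psi|\mathcal H|\partial_\nu\psi\rangle\bigr].
\]
Equality of mixed partials, valid because $\psi$ is smooth, lets us write $\partial_\nu\partial_\mu=\partial_\mu\partial_\nu$. This is \eqref{eq:grad_hess}. The expression is symmetric in $\mu\leftrightarrow\nu$, as a Hessian must be: the first term is symmetric, and the second satisfies $\langle\partial_\nu\psi|\mathcal H|\partial_\mu\psi\rangle=\overline{\langle\partial_\mu\psi|\mathcal H|\partial_\nu\psi\rangle}$, which leaves the real part unchanged.

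\textbf{Matrix form.} Identify the columns of $J$ with the vectors $|\partial_\mu\psi\rangle$, exactly as in Lemma~\ref{lem:uncon_jacobian} and Lemma~\ref{lem:con_jacobian}. Then $(J^\dagger\mathcal H J)_{\mu\nu}=\langle\partial_\mu\psi|\mathcal H|\partial_\nu\psi\rangle$, and the remaining term is by definition $R_{\mu\nu}$. This gives \eqref{eq:hessian_structure}.

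The only point requiring care is conventions, not analysis. Derivatives are taken with respect to the real coordinates ($\vec c,\vec d$ or $\vec\xi^R,\vec\xi^I$), never holomorphic ones. This is what justifies $\partial_\mu\overline\psi=\overline{\partial_\mu\psi}$. It also explains why the constrained case needs the split $\vec\xi=\vec\xi^R+i\vec\xi^I$ used in Lemma~\ref{lem:con_jacobian}.

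No argument about normalization or gauge is needed, since $E$ is defined with the normalized $\psi$ and differentiated as is.
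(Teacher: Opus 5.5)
Your proposal is correct and follows the paper's proof: differentiate the quadratic form directly with respect to real parameters, use Hermiticity of $\mathcal H$ to collapse conjugate pairs into $2\,\mathrm{Re}$, and read off $J^\dagger\mathcal H J$ from the columns $|\partial_\mu\psi\rangle$. The only cosmetic difference is that you differentiate the already-simplified gradient $2\,\mathrm{Re}\langle\partial_\mu\psi|\mathcal H|\psi\rangle$, whereas the paper expands all four second-order terms and then pairs them; your remarks on smoothness and mixed partials are a harmless addition.
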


Equation~\eqref{eq:hessian_structure} partitions the curvature into a Jacobian-driven metric term $J^\dagger\mathcal H J$ and a second-derivative correction $R$. The first term has the Jacobian Gram structure of a pullback metric, but with the Hamiltonian inserted as an energy-dependent weight. The term $R$ captures the intrinsic nonlinear curvature of the ansatz state space and is 
responsible for not making the full Hessian dependent on the Jacobian alone as defined in previous section.

At a stationary point $\vec\theta_*$, let $M_* \equiv \nabla_{\vec\theta}^2 E(\vec\theta_*)$ denote the energy Hessian. Note that the raw parameter directions need not correspond to distinguishable changes of the quantum state so the curvature must be interpreted on the physical tangent space defined by the Fubini-Study metric $S_* \equiv S(\vec\theta_*)$. 
{
Equivalently, we observe that the metric tensor of the physically distinguishable state space generated by the ansatz $|\psi(\vec{\theta})\rangle$ is not the usual identity but $S(\vec{\theta})$ as defined in Eqs.~\eqref{eq:sq_norm_S_metric} and \eqref{eq:defn_S_metric}. This essentially means that the state space is not equally sensitive to all coordinates. Thus to redefine a new set of scale-invariant Euclidean coordinates one can do $d \vec{\phi}^* = S_*^{1/2} d\vec{\theta}^*$. Equivalently, the metric tensor of the physically distinguishable state space generated by the ansatz $|\psi(\vec{\theta})\rangle$ is not the coordinate-space identity but the Fubini--Study metric $S(\vec{\theta})$ defined in Eqs.~\eqref{eq:sq_norm_S_metric} and~\eqref{eq:defn_S_metric}. Consequently, the physical state is not equally sensitive to variations along all parameter directions. At the stationary point $\vec{\theta}_*$, the infinitesimal transformation $d\vec{\phi}_*=S_*^{1/2}d\vec{\theta}_*$ introduces locally metric-normalized coordinates on the physically active tangent subspace. We therefore define the Fubini--Study-normalized physical curvature operator
\begin{equation}\label{eq:normalized_curvature}
K_* \equiv \widetilde{S}_*^{-1/2}M_*\widetilde{S}_*^{-1/2},
\end{equation}
where $\widetilde{S}_*^{-1/2}$ is the square root of the Moore--Penrose pseudoinverse restricted to the physically active tangent subspace. If the eigenvalues of $K_*$ are $\{\mu_i\}$, we define the positive-curvature rank as the cardinality of the resolved curvature spectrum:
\begin{equation}
\label{eq:r_plus_definition}
r_+ \equiv \left| \{ i : \mu_i > \tau_{\mathrm H} \} \right|,
\end{equation}
where $\tau_{\mathrm H} > 0$ is a numerical threshold used to distinguish genuinely confining directions from null or numerically unresolved modes. The associated positive pseudodeterminant is therefore given by
\begin{equation}
\label{eq:pos_det_K}
\det\nolimits_+ K_* \equiv \prod_{\mu_i > \tau_{\mathrm H}} \mu_i.
\end{equation}}
The quantity $r_+$ counts the physically distinguishable directions along which the energy rises to quadratic order around $\vec\theta_*$. It is target dependent and differs from the physical tangent rank $r_{\rm uncon/con}$. The rank $r_{\rm uncon/con}$ counts every locally accessible physical motion, while $r_+$ counts only the motions confined by the rise of the target energy. Thus the resulting hierarchy is
\begin{equation}
\label{eq:dimension_hierarchy}
r_+\le r_{\rm uncon/con}\le |\vec{\theta}|.
\end{equation}
This hierarchy separates target-confined capacity, physical expressive capacity, and raw coordinate capacity.

Without constraints, the unprojected Hessian $\nabla^2_{\vec\theta_{\rm un}}E_{\rm un}$ spans the full $\mathbb R^{2M_k\times 2M_k}$ space. The unconstrained ansatz can express symmetry-breaking parameter variations that leave the target physical state largely unaffected. Consequently, this unconstrained Hessian is heavily populated by zero or near-zero eigenvalues. 

At an unconstrained stationary point, the raw Hessian acts on the full coordinate space, including symmetry-breaking, gauge-like, and weakly identifiable directions. Such redundant or weakly identifiable directions can generate exact or near-zero curvature and thereby obscure the spectrum governing motion within the physically relevant basin~\cite{haug2021capacity,dash2025efficiency,larocca2023overparametrization}. Natural-gradient methods precondition the gradient using the QFIM or the real quantum geometric tensor, typically through a regularized linear solve, whereas Newton methods use the objective Hessian~\cite{stokes2020quantum,RBM_Geom_learning}. In both cases, the unconstrained structural rank-deficiency causes the relevant conditioning of the matrices to become poorly conditioned.

Symmetry projection dynamically alters this curvature landscape by restricting the optimization to the null space of the symmetry constraints. By mapping the parameterization directly onto the constraint-satisfying null space, we systematically excise these unphysical flat directions prior to optimization, compressing the landscape curvature to $\nabla^2_{\vec\theta_{\rm con}}E_{\rm con} \in \mathbb R^{2d_\xi\times 2d_\xi}$. This analytic deflation of structural zero-modes means the residual conditioning of the active physical space is characterized by the Fubini-Study-normalized condition number
\begin{equation}
\kappa_+ = \frac{\mu_{\max}^+}{\mu_{\min}^+},
\end{equation}
where $\mu_{\max}^+$ and $\mu_{\min}^+$ are the extreme resolved positive eigenvalues of $K_*$. Further, symmetry projection removes exact symmetry-forbidden modes and may improve the conditioning of the residual physical curvature spectrum. This reduction can lower the numerical regularization required by metric-aware or curvature-aware optimizers. Dimensional reduction alone does not guarantee a monotonic improvement of $\kappa_+$ because the outcome depends on how the retained subspace aligns with the target landscape.

\subsection{Solution space concentration and a local notion of trainability}

From the preceding subsections, we have established that raw parameter-space volumes are coordinate-dependent and susceptible to inflation by gauge redundancies, parameter rescalings, and locally redundant coordinates~\cite{haug2021capacity,dash2025efficiency,larocca2023overparametrization}. These artifacts can obscure the intrinsic geometry of the physical state manifold. To assess the geometric concentration of high-quality configurations, we measure their local physical volume relative to the total physical volume of the accessible state manifold. Quantum information geometry supplies a gauge-invariant metric on this manifold naturally~\cite{provost1980riemannian,stokes2020quantum}.

The Fubini--Study metric measures the local distinguishability of neighboring pure states and is proportional, up to convention, to the pure-state Quantum Fisher Information Matrix~\cite{provost1980riemannian,braunstein1994statistical,stokes2020quantum}. It is defined as
\begin{equation}\label{eq:fs_metric}
    S_{\mu\nu}=\operatorname{Re}\!\left[\langle\partial_\mu\psi|\Pi_\perp|\partial_\nu\psi\rangle\right]=\operatorname{Re}\!\left[J^\dagger\Pi_\perp J\right]_{\mu\nu},
\end{equation}
where $\Pi_\perp=\mathbb I-|\psi\rangle\langle\psi|$. For any real parameter-space tangent vector $\vec v$, the quadratic form $\vec v^{\,T}S\vec v=\|\Pi_\perp J\vec v\|^2$ vanishes precisely when $\vec v$ generates no physically distinguishable first-order displacement of the quantum state. Such kernel directions correspond to locally redundant or gauge-like parameter variations~\cite{haug2021capacity,dash2025efficiency}.

Let $\vec\theta\sim\vec\theta'$ when $|\psi(\vec\theta')\rangle=e^{i\phi}|\psi(\vec\theta)\rangle$ for some $\phi\in\mathbb R$, and let $\bar\Theta=\Theta/\!\sim$ denote the reduced parameter domain obtained by identifying physically equivalent representations of the same projective quantum state~\cite{provost1980riemannian,dash2025efficiency}. On each constant-rank chart of $\bar\Theta$, let $\vec\eta$ denote physical coordinates and let $S_{\rm red}(\vec\eta)$ be the corresponding nonsingular Fubini\textendash Study metric.

The total reachable physical volume of the ansatz manifold is given by:
\begin{equation}
\label{eq:total_fs_volume}
\mathrm{Vol}_{\rm FS}(\mathcal M) = \int_{\bar\Theta} \sqrt{\det\nolimits' S_{\rm red}(\vec\eta)}\,d\vec\eta.
\end{equation}
 Defining the domain in this manner prevents redundant coordinate covers from artificially multiplying the intrinsic physical volume~\cite{provost1980riemannian,stokes2020quantum}. Finiteness additionally requires the reduced reachable manifold to have finite Fubini--Study volume.

\begin{figure}[tb]
    \centering
    \includegraphics[width=\linewidth]{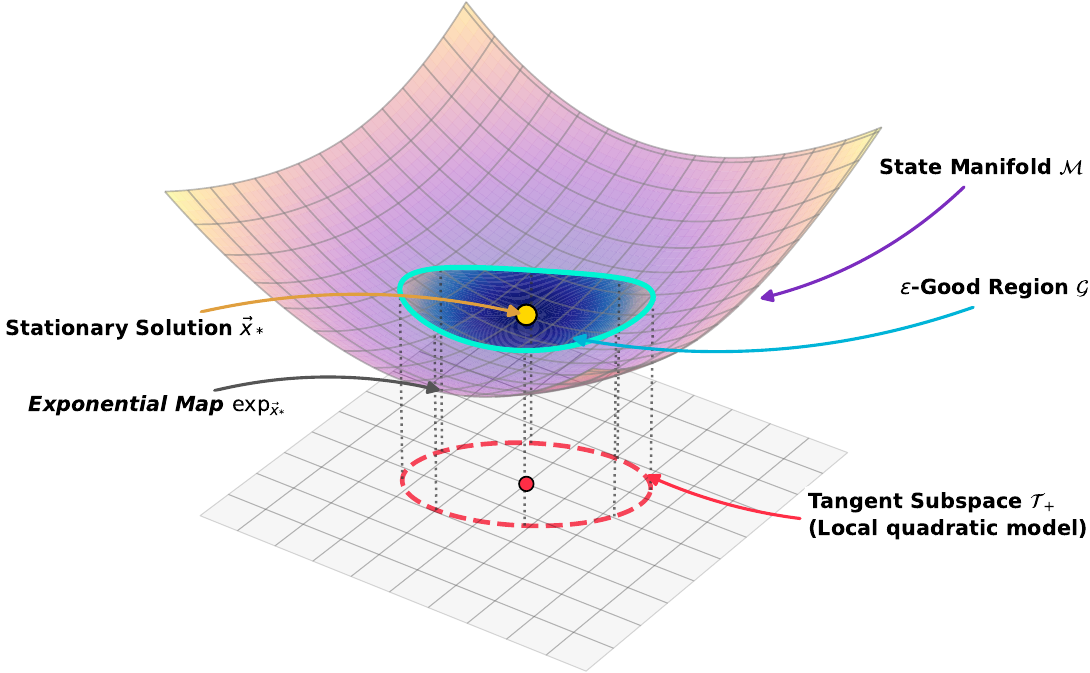}
    \caption{\textbf{Parameter landscape and local solution density.} 
    The curved upper surface represents the physical state manifold $\mathcal{M}$, hosting the $\epsilon$-good parameter basin $\mathcal{G}_\epsilon(\vec{x}_*)$ (cyan region) centered around the stationary solution $\vec{x}_*$ (gold marker). 
    The flat plane below represents the tangent subspace $\mathcal{T}_+$, which acts as the local quadratic model of the landscape. 
    The Fubini-Study exponential map $\exp_{\vec{x}_*}$ (represented by the vertical dashed projection lines) maps the boundary of the quadratic basin in $\mathcal{T}_+$ (dashed pink ellipse) onto the physical manifold, defining the local volume $V_{\text{good},+}^{\text{FS}}(\vec{x}_*; \epsilon)$ derived in Theorem~\ref{thm:good_volume}. 
    This coordinate-free framework quotients out exact parameter redundancies and unconfined gauge directions, which would otherwise render a Hessian-based full-dimensional basin unbounded, and thereby defines the target-aware useful-expressibility metric $f_\epsilon(\vec{x}_*)$ in Definition~\ref{def:trainability_metric}.}
    \label{fig:differential_geometry_trainability}
\end{figure}

To formalize this distinction, consider an enlarged ansatz whose physical manifold factorizes locally as $\mathcal M_B=\mathcal M_A\times\mathcal M_{\rm extra}$. Assume that the Fubini--Study measure and the acceptable region factorize as
\begin{align}\label{eq:product_factorization}
dV_{{\rm FS},B}&=dV_{{\rm FS},A}\,dV_{{\rm FS},{\rm extra}},\\
\mathcal G_{\epsilon,B}&=\mathcal G_{\epsilon,A}\times\mathcal R_{\rm extra}.
\end{align}
Defining
\begin{equation}
\label{eq:alpha_extra}
\alpha\equiv\frac{\operatorname{Vol}_{\rm FS}(\mathcal R_{\rm extra})}{\operatorname{Vol}_{\rm FS}(\mathcal M_{\rm extra})},
\end{equation}
the fractional volumes satisfy
\begin{equation}
\label{eq:product_fraction}
f_{\epsilon,B} = \alpha f_{\epsilon,A}.
\end{equation}
When the added directions are predominantly irrelevant to the target, $\alpha\ll1$ and $f_{\epsilon,B}\ll f_{\epsilon,A}$. Enlarging the reachable manifold does not improve useful expressibility unless the added physical volume contains a proportionate volume of acceptable states. Absolute basin width measures local robustness, while normalized basin volume measures useful expressibility.

We formalize this conceptual framework with the following definition:

\begin{definition}[Target-aware Useful Expressibility]
\label{def:trainability_metric}
Let $\vec x_*=[\psi(\vec\theta_*)]\in\mathcal M_\Theta$ be a stationary physical state. Let $\mathcal G_\epsilon(\vec x_*)$ denote the connected component containing $\vec x_*$ of the physical sublevel set whose energy lies within $\epsilon>0$ of $E(\vec x_*)$. The target-aware useful expressibility is
\begin{equation}
\label{eq:trainability_metric}
f_\epsilon(\vec x_*)\equiv\frac{\operatorname{Vol}_{\rm FS}\!\left[\mathcal G_\epsilon(\vec x_*)\right]}{\operatorname{Vol}_{\rm FS}(\mathcal M_\Theta)}.
\end{equation}
The numerator and denominator are measured on the same physical state manifold with the same intrinsic dimension.
\end{definition}

This dimensionless ratio $f_\epsilon(\vec\theta_*)$ measures the efficiency with which the ansatz allocates its physically distinguishable capacity to the target neighborhood. It is target-dependent through the energy landscape and tolerance $\epsilon$, and ansatz-dependent through both the local curvature and the total reachable manifold. The quantity $f_\epsilon$ is a geometric solution-density proxy and is not, by itself, the success probability of a particular optimization algorithm. Actual convergence also depends on the initialization distribution, gradient statistics, basin connectivity, intervening barriers, sampling noise, and update dynamics~\cite{RBM_Geom_learning,holmes2022connecting,mcclean2018barren}. Its predictive relation to optimization success must therefore be assessed empirically.

A trainable minimum must be evaluated through both the number and the stiffness of its confining directions. The positive-curvature rank $r_+$ specifies the intrinsic dimension of the local basin, while the spectrum of the physical curvature operator determines its width within that subspace. A rank-only diagnostic cannot distinguish broad and narrow minima of equal intrinsic dimension, whereas an ordinary determinant vanishes in the presence of exact null modes and, by itself, omits the dimension-dependent scaling with the tolerance. The pair $(r_+, \det_+ K_*)$ therefore provides the minimal spectral information required to quantify the local volume of acceptable solutions.

The Hessian determines the leading local geometry of the good region only within the quadratically confined physical directions. We now formalize this local contribution. Full derivations are provided in Appendix~\ref{app:jacobian_proofs}.

To evaluate the local volume without contamination from coordinate redundancies, we use the physical curvature operator $K_*$ defined in Eq.~\eqref{eq:normalized_curvature}. Let $\mathcal I_+=\{i\mid\mu_i>\tau_{\rm H}\}$ denote the resolved positive-curvature modes, with $r_+=|\mathcal I_+|$ and $\det_+K_*=\prod_{i\in\mathcal I_+}\mu_i$. Let $\mathcal T_+\subseteq T_{\vec x_*}\mathcal M_\Theta$ be the $r_+$-dimensional tangent subspace spanned by the corresponding eigenvectors, and let $\exp_{\vec x_*}$ denote the Fubini--Study exponential map.

\begin{theorem}[Local $\epsilon$-good volume at a stationary point]
\label{thm:good_volume}
Let $\vec x_*\in\mathcal M_\Theta$ be a stationary physical state that is a local minimum on the physically active tangent space and has no resolved negative-curvature modes. Let $\mathcal T_+\subseteq T_{\vec x_*}\mathcal M_\Theta$ be the $r_+$-dimensional physical tangent subspace spanned by the resolved positive-curvature eigenvectors of $K_*$. The Fubini--Study volume of the local $\epsilon$-good region restricted to $\mathcal T_+$ satisfies
\begin{align}
V_{{\rm good},+}^{\rm FS}(\vec x_*;\epsilon)&\equiv\operatorname{Vol}_{\rm FS}\!\left[\mathcal G_\epsilon(\vec x_*)\cap\exp_{\vec x_*}(\mathcal T_+)\right],\\
V_{{\rm good},+}^{\rm FS}(\vec x_*;\epsilon)&=\frac{\pi^{r_+/2}}{\Gamma(r_+/2+1)}\frac{(2\epsilon)^{r_+/2}}{\sqrt{\det\nolimits_+K_*}}\left[1+o(1)\right]
\end{align}
as $\epsilon\to0^+$.
\end{theorem}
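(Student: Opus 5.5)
The plan is to pass to Riemann normal coordinates for the Fubini--Study metric at $\vec x_*$, restricted to the $r_+$-dimensional subspace $\mathcal T_+$. In those coordinates, write the energy as a quadratic form governed by $K_*$, and then compute the volume of the resulting ellipsoid.

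\textbf{Step 1: coordinates.} Choose an orthonormal basis of $\mathcal T_+$ with respect to $S_*$, so that the metric at $\vec x_*$ is the identity. Concretely, $\vec u = \widetilde S_*^{1/2}\,\delta\vec\theta$ restricted to the active subspace, which is exactly the change of variables $d\vec\phi_* = S_*^{1/2}d\vec\theta_*$ used to define $K_*$. Parametrize the restricted region by $\vec u\mapsto \exp_{\vec x_*}(\vec u)$ for $\vec u\in\mathcal T_+$.

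Two standard facts hold in normal coordinates. First, the induced metric on $\exp_{\vec x_*}(\mathcal T_+)$ is $\delta_{ij}+O(|\vec u|^2)$, so its volume density is $\sqrt{\det g(\vec u)}=1+O(|\vec u|^2)$. Second, the geodesic and coordinate second-order expansions agree at a stationary point, because the gradient term vanishes there. Hence the Christoffel/$R$-type corrections do not contaminate the quadratic term.

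\textbf{Step 2: Taylor expansion of the energy.} Using Lemma~\ref{lem:hessian} and stationarity ($\nabla E=0$), the energy along $\mathcal T_+$ is
\[
E(\exp_{\vec x_*}\vec u)-E(\vec x_*)=\tfrac12\,\vec u^{\,T}K_+\vec u+O(|\vec u|^3),
\]
where $K_+=\operatorname{diag}(\mu_i)_{i\in\mathcal I_+}$ is $K_*$ restricted to $\mathcal T_+$. Its eigenvalues satisfy $\mu_i\ge\tau_{\rm H}>0$, so $K_+$ is positive definite. Set $\lambda=\min\mu_i>0$. Then for $|\vec u|\le\delta$ small enough, the cubic remainder is at most $\tfrac{\lambda}{4}|\vec u|^2\cdot c|\vec u|$.

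\textbf{Step 3: sandwich the sublevel set.} Let $\mathcal E(s)=\{\vec u:\tfrac12\vec u^TK_+\vec u\le s\}$. For small $\epsilon$, I will show
\[
\mathcal E\big(\epsilon(1-C\sqrt\epsilon)\big)\subseteq \mathcal G_\epsilon\cap\exp(\mathcal T_+)\subseteq \mathcal E\big(\epsilon(1+C\sqrt\epsilon)\big),
\]
in coordinates. This follows because on $\mathcal E(2\epsilon)$ one has $|\vec u|=O(\sqrt\epsilon)$, so the remainder is $O(\epsilon^{3/2})$.

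The connected-component clause in Definition~\ref{def:trainability_metric} matters here. Positive definiteness makes $\vec x_*$ a strict local minimum along $\mathcal T_+$. So for small $\epsilon$, the component containing $\vec x_*$ is trapped inside a $\delta$-ball: on the sphere $|\vec u|=\delta$ the energy exceeds $\lambda\delta^2/4>\epsilon$. This excludes far-away pieces of the sublevel set.

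\textbf{Step 4: volume computation.} The ellipsoid has Euclidean volume
\[
\operatorname{Vol}\mathcal E(s)=\frac{\pi^{r_+/2}}{\Gamma(r_+/2+1)}\frac{(2s)^{r_+/2}}{\sqrt{\det_+K_*}}.
\]
Multiply by the density $1+O(\epsilon)$ and substitute $s=\epsilon(1\pm C\sqrt\epsilon)$. This gives the stated formula with an $o(1)$ error, in fact an $O(\sqrt\epsilon)$ error.

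\textbf{Main obstacle.} The delicate step is Step 2: justifying that the Hessian of $E\circ\exp_{\vec x_*}$ on $\mathcal T_+$ equals $K_*$ restricted to $\mathcal T_+$. Two issues arise.

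First, the Hessian $M_*$ lives in raw parameters $\vec\theta$, which may carry null directions of $S_*$. I would handle this by working on a constant-rank chart of $\bar\Theta$ as in Eq.~\eqref{eq:total_fs_volume}. Null directions of $S_*$ are physically trivial there, so $M_*$ annihilates them at first order, and the pseudoinverse conjugation gives the intrinsic covariant Hessian.

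Second, $E\circ\exp$ is a function on the manifold rather than on parameter space. At a critical point, the covariant Hessian equals the coordinate Hessian in any chart, so the two agree. I would assume the chart map is $C^3$ near $\vec x_*$ so that the cubic remainder bound in Step 2 holds uniformly.
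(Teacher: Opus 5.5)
Your proposal is correct and follows essentially the same route as the paper's proof: pass to Fubini--Study-orthonormal tangent coordinates in which the curvature is represented by $K_*$, expand the energy along $\exp_{\vec x_*}(\mathcal T_+)$ as $\tfrac12\sum_{i\in\mathcal I_+}\mu_i y_i^2$, and read off the volume of the ellipsoid with semi-axes $\sqrt{2\epsilon/\mu_i}$ against a volume element $1+o(1)$. You also supply rigor the paper omits when it simply keeps terms to quadratic order: the ellipsoid sandwich controlling the cubic remainder, the remark that covariant and coordinate Hessians coincide at a critical point, and the pseudoinverse argument identifying $K_*$ with the intrinsic Hessian. One small correction to your connected-component step: $\mathcal G_\epsilon$ is a component in the full manifold, so it can leave along directions outside $\mathcal T_+$ and need not be trapped in a $\delta$-ball. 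What your Step 3 actually needs is the bound $E-E(\vec x_*)\ge\tfrac{\lambda}{4}|\vec u|^2$ on the local slice $\exp_{\vec x_*}(\mathcal T_+\cap B_\delta)$, to which the theorem must be read as restricted, together with connectedness of the image of the small inner ellipsoid; both are already contained in your argument.
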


Theorem~\ref{thm:good_volume} strips away coordinate artifacts to expose the true physical capacity of the local minimum. By restricting integration to the exponential map of the positive tangent subspace $\mathcal T_+$, the formulation systematically quotients out exact parameter redundancies and unconfined flat directions that would otherwise cause a naive full-dimensional volume formula to diverge. The resulting expression shows that the tolerance scaling is governed by the exponent $r_+/2$, while the aggregate stiffness of the active directions enters through the physical pseudodeterminant $\det_+K_*$. Exact null directions are mathematically excluded because the quadratic approximation does not confine the good set along them, while unresolved negative-curvature directions signify saddle points rather than local minima, rendering this local-basin construction valid exclusively when no resolved negative mode is present.

Within the locally nondegenerate regime of Theorem~\ref{thm:good_volume}, the Hessian-derived volume provides the leading approximation to the numerator of the useful-expressibility metric. Accordingly,
\begin{equation}
\label{eq:local_useful_expressibility}
f_\epsilon(\vec x_*) \approx \frac{V_{{\rm good},+}^{\rm FS}(\vec x_*;\epsilon)}
{\operatorname{Vol}_{\rm FS}(\mathcal M)}.
\end{equation}
This approximation applies when the stationary state is locally nondegenerate on its physical tangent space, so that every physically active tangent direction is confined to quadratic order and $r_+=r_{\rm loc}$. 

We first compare the normalized solution concentrations of constrained and unconstrained ansatze. Let $f_{\epsilon,\rm con}$ and $f_{\epsilon,\rm un}$ denote the useful expressibilities evaluated at their respective stationary states for the same target Hamiltonian and tolerance. Let $r_+^{\rm con}$ and $r_+^{\rm un}$ be the corresponding positive-curvature ranks, and let $K_{\rm con}$ and $K_{\rm un}$ be the associated Fubini--Study-normalized curvature operators. Defining
\begin{equation}
C_r
\equiv
\frac{\pi^{r/2}}{\Gamma(r/2+1)},
\end{equation}
and suppressing the asymptotically vanishing corrections in Theorem~\ref{thm:good_volume}, we obtain
\begin{equation}
\label{eq:volume_ratio_punchline_revised}
\resizebox{\columnwidth}{!}{$\displaystyle
\frac{f_{\epsilon,\rm con}}{f_{\epsilon,\rm un}}
\approx
\frac{\operatorname{Vol}_{\rm FS}(\mathcal M_{\rm un})}
{\operatorname{Vol}_{\rm FS}(\mathcal M_{\rm con})}
\frac{C_{r_+^{\rm con}}}{C_{r_+^{\rm un}}}
(2\epsilon)^{(r_+^{\rm con}-r_+^{\rm un})/2}
\left(
\frac{\det\nolimits_+K_{\rm un}}
{\det\nolimits_+K_{\rm con}}
\right)^{1/2}.
$}
\end{equation}

To separate this dimensional contribution from the local basin scale, we define the characteristic $\epsilon$-basin scale
\begin{equation}
\label{eq:characteristic_basin_scale}
R_\epsilon(\vec x_*)
\equiv
\left[
V_{{\rm good},+}^{\rm FS}(\vec x_*;\epsilon)
\right]^{1/r_+}.
\end{equation}
Invoking Theorem~\ref{thm:good_volume}, it is easy to see that this quantity satisfies
\begin{equation}
\label{eq:characteristic_basin_scale_explicit}
R_\epsilon(\vec x_*) = C_{r_+}^{1/r_+} \frac{\sqrt{2\epsilon}} {\left(\det\nolimits_+K_*\right)^{1/(2r_+)}} \left[1+o(1)\right]
\end{equation}
as $\epsilon\to0^+$. Note that we can equivalently write the empirical quantity $\frac{1}{r_+}\log V_{{\rm good},+}^{\rm FS}$ to be the logarithm of $R_\epsilon$. This quantity $R_\epsilon$ converts the $r_+$-dimensional local good volume into a volume-equivalent linear scale. For the locally ellipsoidal basin described by Theorem~\ref{thm:good_volume}, it is proportional to the geometric mean of the principal semi-axis lengths, with proportionality factor $C_{r_+}^{1/r_+}$, and therefore defines a characteristic basin scale.

At a common tolerance, the constrained-to-unconstrained comparison becomes
\begin{equation}
\label{eq:characteristic_basin_scale_ratio}
\frac{R_{\epsilon,\rm con}}
{R_{\epsilon,\rm un}}
\approx
\frac{C_{r_+^{\rm con}}^{1/r_+^{\rm con}}}
{C_{r_+^{\rm un}}^{1/r_+^{\rm un}}}
\frac{
\left(\det\nolimits_+K_{\rm un}\right)^{1/(2r_+^{\rm un})}
}{
\left(\det\nolimits_+K_{\rm con}\right)^{1/(2r_+^{\rm con})}
}.
\end{equation}
Consequently, Eq.~\eqref{eq:characteristic_basin_scale_ratio} compares the geometric-mean curvature scale of the retained confining directions without the power-law enhancement generated solely by the difference between $r_+^{\rm con}$ and $r_+^{\rm un}$.

The two diagnostics therefore answer distinct but complementary questions. The normalized useful expressibility $f_\epsilon$ measures how efficiently the full reachable physical manifold allocates its volume to the target-accurate neighborhood. Its constrained-to-unconstrained ratio includes the combined effects of global manifold contraction, positive-curvature rank, and local curvature. By contrast, $R_\epsilon$ measures the characteristic scale of the local accurate basin after accounting for the number of retained confining directions. A simultaneous increase in $f_\epsilon$ and $R_\epsilon$ demonstrates that symmetry projection does more than reduce the accessible dimension. It both concentrates the reachable manifold around target-relevant states and broadens the acceptable region within the physical directions that remain.

\section{Numerical implementation and reproducibility}
\label{sec:numerical_implementation}

All calculations were implemented in JAX using a custom codebase built on top of NetKet~\cite{jax2018github,netket2:2019,netket3:2022}. We considered periodic spin chains with all diagonal one- and two-body Pauli-$Z$ generators, corresponding to $k_{\max}=2$. The unconstrained ansatz was optimized over the amplitude coefficients $\{c_{\vec\alpha}\}$ and, when enabled, the phase coefficients $\{d_{\vec\alpha}\}$. Symmetry-constrained ansatze were optimized directly in reduced coordinates spanning the null space of the corresponding coefficient constraints. We considered the unconstrained, bit-flip, translation, reflection, space-group, and their combined symmetry families. The TFIM calculations optimized only the real amplitude sector, whereas the XXZ calculations included both amplitude and phase coefficients and employed the Marshall sign transformation~\cite{marshall1955antiferromagnetism}.

The $N=20$ TFIM and XXZ phase sweeps sampled $h/J\in[0,3]$ and $\Delta\in[-2,2]$, respectively, using denser grids near their critical regimes. Each run used $300$ stochastic-gradient iterations preconditioned by stochastic reconfiguration~\cite{sorella2005wave,carleo2017solving}, with learning rate $10^{-2}$ and diagonal shift $10^{-1}$. The TFIM used a local Metropolis sampler, while the XXZ used a nearest-neighbor exchange sampler~\cite{metropolis1953equation}. Both used $1008$ samples per iteration, $63$ persistent chains, a sweep size of $20$, and $20$ discarded samples per chain. Exact diagonalization, wavefunction fidelity, and real-space correlation diagnostics were evaluated for these $N=20$ benchmarks. Optimization used a fixed iteration budget without early stopping.

\begin{figure*}[t]
    \centering
    \includegraphics[
        width=\textwidth,
        trim={0 0 0 0},
        clip
    ]{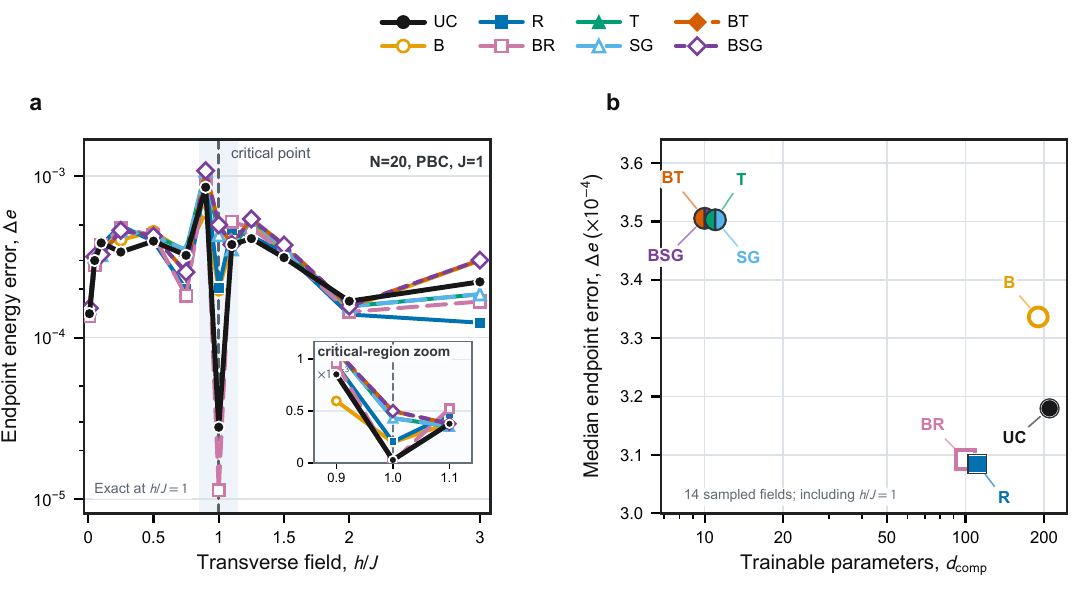}
    \caption{
    \textbf{Symmetry compression preserves TFIM variational accuracy.} 
    \textbf{a}, Endpoint energy-density error $\Delta e=|E_{\mathrm{VMC}}-E_{\mathrm{ref}}|/N$ for the 1D TFIM ($N=20$, $J=1$, $k_{\max}=2$). Each marker is the endpoint error of an independently optimized VMC run at a fixed transverse field $h$ and symmetry constraint. Lines are guides to the eye across the sweep, not optimization trajectories. The dashed vertical line marks the exact critical point $h_c/J=1$ of the one-dimensional TFIM~\cite{pfeuty1970tfim}.
    \textbf{b}, Sweep-level accuracy-compression summary. The horizontal axis indicates the number of trainable parameters after symmetry compilation ($d_{\mathrm{comp}}$), and the vertical axis is the median endpoint error over the field sweep. Translation-based constraints reduce the learner from 210 to 10--11 parameters while maintaining the unconstrained $10^{-4}$ error scale. Labels: UC (unconstrained); B (bitflip); R (reflection); BR (bitflip + reflection); T (translations); SG (space group); BT (bitflip + translations); BSG (bitflip + space group).
    }
    \label{fig:tfim_accuracy_compression}
\end{figure*}

Critical-TFIM geometry calculations were performed at the exact critical point $J=h=1$~\cite{pfeuty1970tfim} for $N\in\{8,12,16,20,32,64\}$ using eight independent seeds for each system size and symmetry family. These runs used the same optimizer settings for $300$ iterations. The local Metropolis sampler used $4096$ samples per iteration, $256$ persistent chains, a sweep size of $64$, $64$ discarded samples per chain, and a sampling chunk size of $32$.

Sampled Fubini--Study diagnostics used $1024$ state samples and eight Hutchinson probe vectors~\cite{hutchinson1990stochastic}. The local good-volume proxy was estimated from $128$ perturbations sampled uniformly from a parameter-space ball of radius $0.05$ around the optimized endpoint. The same perturbations were reused for $\epsilon_{\mathrm{dense}}\in\left\{10^{-4},3\times10^{-4},10^{-3},3\times10^{-3},10^{-2}\right\}$.

The global Fubini--Study normalization was estimated once for each system-size and symmetry-family pair using reference seed $0$, with $16$ outer samples drawn from a unit-radius parameter-space ball centered at the origin. Each log-pseudodeterminant estimate used matrix-free stochastic Lanczos quadrature~\cite{ubaru2017fast} with $4096$ state samples, eight probe vectors, $24$ Lanczos iterations, and full reorthogonalization. The sampled tangent-space rank cutoff was $\tau_{\mathrm{rank}}=10^{-10}$, while the absolute and relative Fubini--Study cutoffs were $\tau_{\mathrm{FS}}^{\mathrm{abs}}=10^{-12}$ and $\tau_{\mathrm{FS}}^{\mathrm{rel}}=10^{-8}$. These sampled calculations did not evaluate the objective Hessian and therefore did not require a Hessian-eigenvalue cutoff.

\section{Results}
\label{sec:results}

In this section, we present the empirical validation of the symmetry-compiled generalized NQS framework. We first examine the numerical tradeoffs between parameter compression and ground-state accuracy across standard one-dimensional spin chains. Following this, we evaluate the algorithmic scaling of optimization runtimes for larger systems. Finally, we map these performance gains back to their physical origins by analyzing the geometry and solution density of the resulting optimization landscapes.

\subsection{Variational accuracy and parameter compression}

We first evaluate whether symmetry compilation successfully compresses the parameter space without degrading the variational expressivity of the NQS. We benchmark against two standard one-dimensional systems with periodic boundary conditions, the transverse-field Ising model and the spin-$1/2$ XXZ chain~\cite{pfeuty1970tfim,yang1966xxzI,yang1966xxzIII}. Both models are evaluated at $N=20$ spins, an energy scale of $J=1$, and a locality cutoff $k_{\max}=2$. We track the energy-density error $\Delta e = |E_{\mathrm{VMC}}-E_{\mathrm{ref}}|/N$ to compare the optimized VMC energy against the exact ground state.

Figures~\ref{fig:tfim_accuracy_compression} and \ref{fig:xxz_accuracy_compression} summarize the accuracy-compression tradeoffs across the TFIM transverse-field ($h$) and XXZ anisotropy ($\Delta$) sweeps. We test eight symmetry constraints, ranging from the unconstrained baseline (UC) to the full one-dimensional space group augmented with bitflip symmetry (BSG).

The central finding is that hard-wiring symmetries yields substantial parameter reduction while maintaining variational accuracy within the resolution of the reported benchmarks. In both models, the unconstrained learner requires 210 trainable parameters. Imposing reflection or bitflip symmetries individually yields mild to moderate reductions. Enforcing translation-based constraints (T, SG, BT, BSG) reduces the learner to 10 to 11 parameters, corresponding to a roughly twenty-fold reduction.

Despite this severe dimensional truncation, the symmetry-compiled models do not systematically fail relative to the unconstrained baseline. Across both sweeps, the median endpoint errors remain tightly bounded in the $10^{-4}$ to $10^{-3}$ range. In our sweeps, the largest absolute errors occur near the TFIM critical point $h_c=1$ and near the strongly anisotropic edges of the investigated XXZ range, indicating that these parameter regimes are the most demanding under the fixed sampling and optimization budgets used here. Even in these demanding regimes, the heavily compressed learners remain within the same endpoint-error range as the unconstrained baseline.

These results confirm the physical premise of our construction: when the target Hamiltonian possesses a symmetry, the corresponding features in the ansatz are inherently redundant. Symmetry compilation structurally excises these redundancies, drastically reducing memory overhead without sacrificing the degrees of freedom required to capture the ground state.

\begin{figure*}[t]
    \centering
    \includegraphics[
        width=\textwidth,
        trim={0 0 0 0},
        clip
    ]{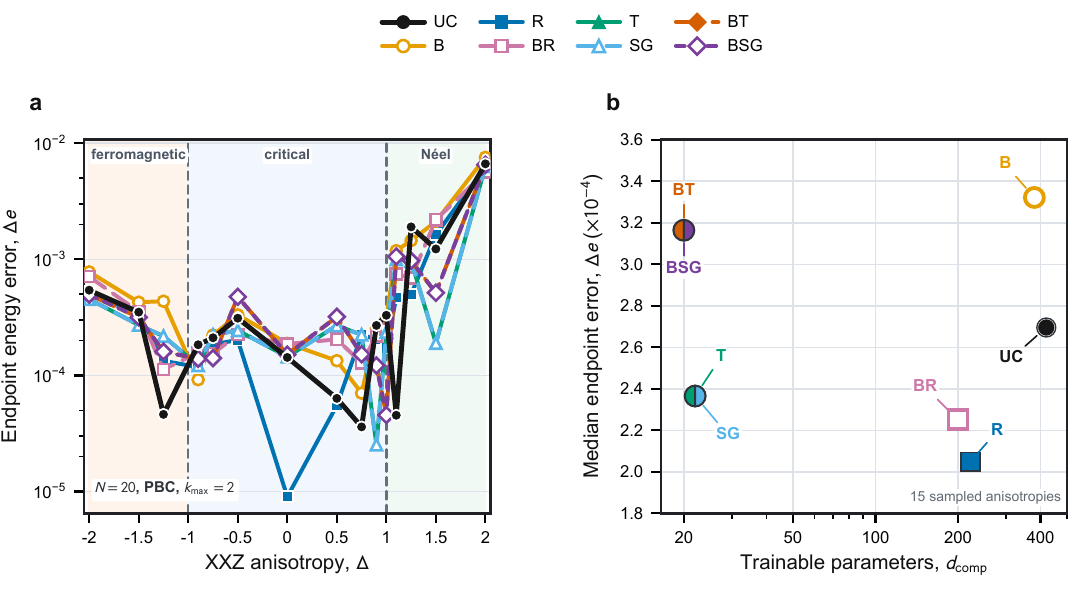}
    \caption{
    \textbf{Symmetry compression preserves XXZ variational accuracy across the anisotropy sweep.}
    \textbf{a}, Endpoint energy-density error $\Delta e=|E_{\mathrm{VMC}}-E_{\mathrm{ref}}|/N$ for the 1D XXZ benchmark ($N=20$, $J=1$, $k_{\max}=2$) as a function of anisotropy $\Delta$. Dashed vertical lines mark the standard reference anisotropies $\Delta=\pm 1$.
    \textbf{b}, Sweep-level accuracy-compression summary, sharing the axes and symmetry labels of Fig.~\ref{fig:tfim_accuracy_compression}. As in the TFIM benchmark, aggressive translation-based compression tightly matches the unconstrained median error. Labels: UC (unconstrained); B (bitflip); R (reflection); BR (bitflip + reflection); T (translations); SG (space group); BT (bitflip + translations); BSG (bitflip + space group).
    }
\label{fig:xxz_accuracy_compression}
\end{figure*}

\begin{figure*}[t]
    \centering
    \includegraphics[
        width=\textwidth,
        trim={0 0 0 0},
        clip
    ]{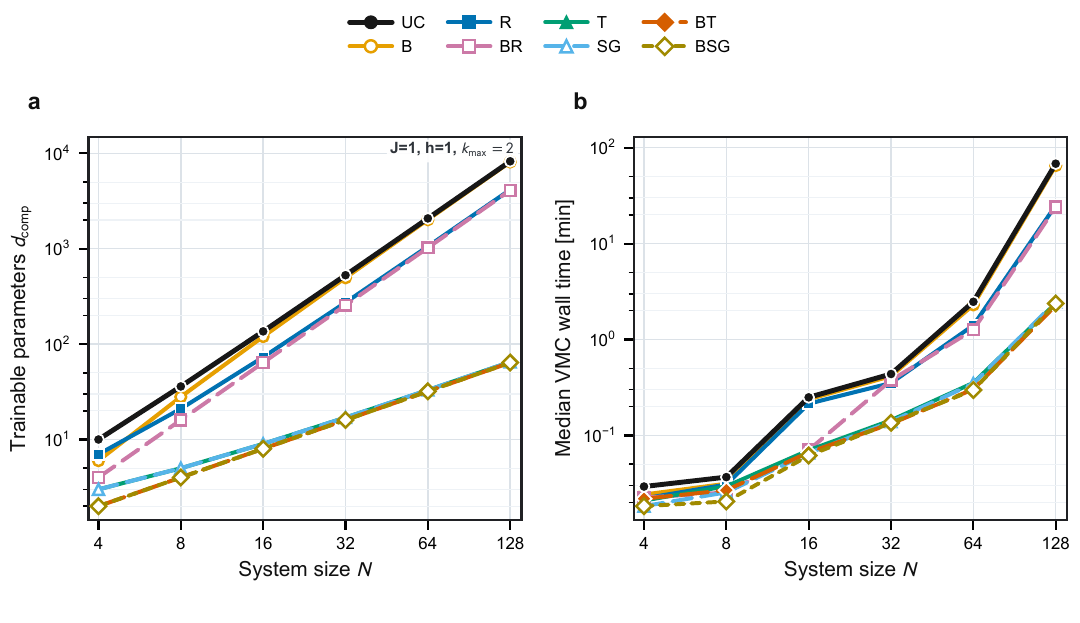}
    \caption{
    \textbf{Scaling of the symmetry-compiled learner.} \textbf{a}, Number of trainable parameters as a function of system size. \textbf{b}, VMC training wall time measured by the internal wall-clock timer around the optimization loop. Spatial symmetry constraints explicitly reduce the two-body learner from quadratic parameter growth to orbit-count growth, generating corresponding speedups in optimization runtime. Labels: UC (unconstrained); B (bitflip); R (reflection); BR (bitflip + reflection); T (translations); SG (space group); BT (bitflip + translations); BSG (bitflip + space group).
    }
    \label{fig:tfim-param-time-scaling}
\end{figure*}

\begin{table*}[t]
\centering
\caption{
\textbf{Diagnostic compression and runtime summaries for the largest TFIM sweeps.}
Compression and speedup are computed relative to the unconstrained learner at the equivalent system size. Here $T$ denotes $T_{\mathrm{VMC}}$, the wall time measured around the VMC optimization loop.
}
\label{tab:tfim-diagnostic-n64-n128}
\begingroup
\scriptsize
\setlength{\tabcolsep}{2.8pt}
\renewcommand{\arraystretch}{0.92}
\begin{minipage}[t]{0.485\textwidth}
\centering
\textbf{$N=64$}

\vspace{0.4em}

\begin{tabular}{lrrrrr}
\hline\hline
Con. & $d_{\mathrm{comp}}$ & Comp. & $T$ [min] & Speedup & $e_{\mathrm{best}}/N$ \\
\hline
UC  & 2080 & $1.00\times$  & $2.48$ & $1.00\times$ & $-1.272366$ \\
B   & 2016 & $1.03\times$  & $2.30$ & $1.08\times$ & $-1.272334$ \\
R   & 1057 & $1.97\times$  & $1.41$ & $1.76\times$ & $-1.272456$ \\
BR  & 1024 & $2.03\times$  & $1.26$ & $1.96\times$ & $-1.272157$ \\
T   & 33   & $63.03\times$ & $0.35$ & $7.00\times$ & $-1.272424$ \\
SG  & 33   & $63.03\times$ & $0.35$ & $7.11\times$ & $-1.272424$ \\
BT  & 32   & $65.00\times$ & $0.30$ & $8.17\times$ & $-1.272293$ \\
BSG & 32   & $65.00\times$ & $0.30$ & $8.32\times$ & $-1.272293$ \\
\hline\hline
\end{tabular}
\end{minipage}
\hfill
\begin{minipage}[t]{0.485\textwidth}
\centering
\textbf{$N=128$}

\vspace{0.4em}

\begin{tabular}{lrrrrr}
\hline\hline
Con. & $d_{\mathrm{comp}}$ & Comp. & $T$ [min] & Speedup & $e_{\mathrm{best}}/N$ \\
\hline
UC  & 8256 & $1.00\times$   & $68.21$ & $1.00\times$  & $-1.271802$ \\
B   & 8128 & $1.02\times$   & $65.19$ & $1.05\times$  & $-1.272023$ \\
R   & 4161 & $1.98\times$   & $25.05$ & $2.72\times$  & $-1.272167$ \\
BR  & 4096 & $2.02\times$   & $24.07$ & $2.83\times$  & $-1.272076$ \\
T   & 65   & $127.02\times$ & $2.39$  & $28.51\times$ & $-1.272099$ \\
SG  & 65   & $127.02\times$ & $2.47$  & $27.67\times$ & $-1.272099$ \\
BT  & 64   & $129.00\times$ & $2.27$  & $30.04\times$ & $-1.272026$ \\
BSG & 64   & $129.00\times$ & $2.37$  & $28.73\times$ & $-1.272026$ \\
\hline\hline
\end{tabular}
\end{minipage}

\vspace{0.35em}

\raggedright
\footnotesize
Constraint labels are UC (unconstrained), B (bitflip), R (reflection), BR (bitflip plus reflection), T (translations), SG (space group), BT (bitflip plus translations), and BSG (bitflip plus space group).
\endgroup
\end{table*}

\begin{figure*}[t]
    \centering
    \includegraphics[width=0.8\textwidth]{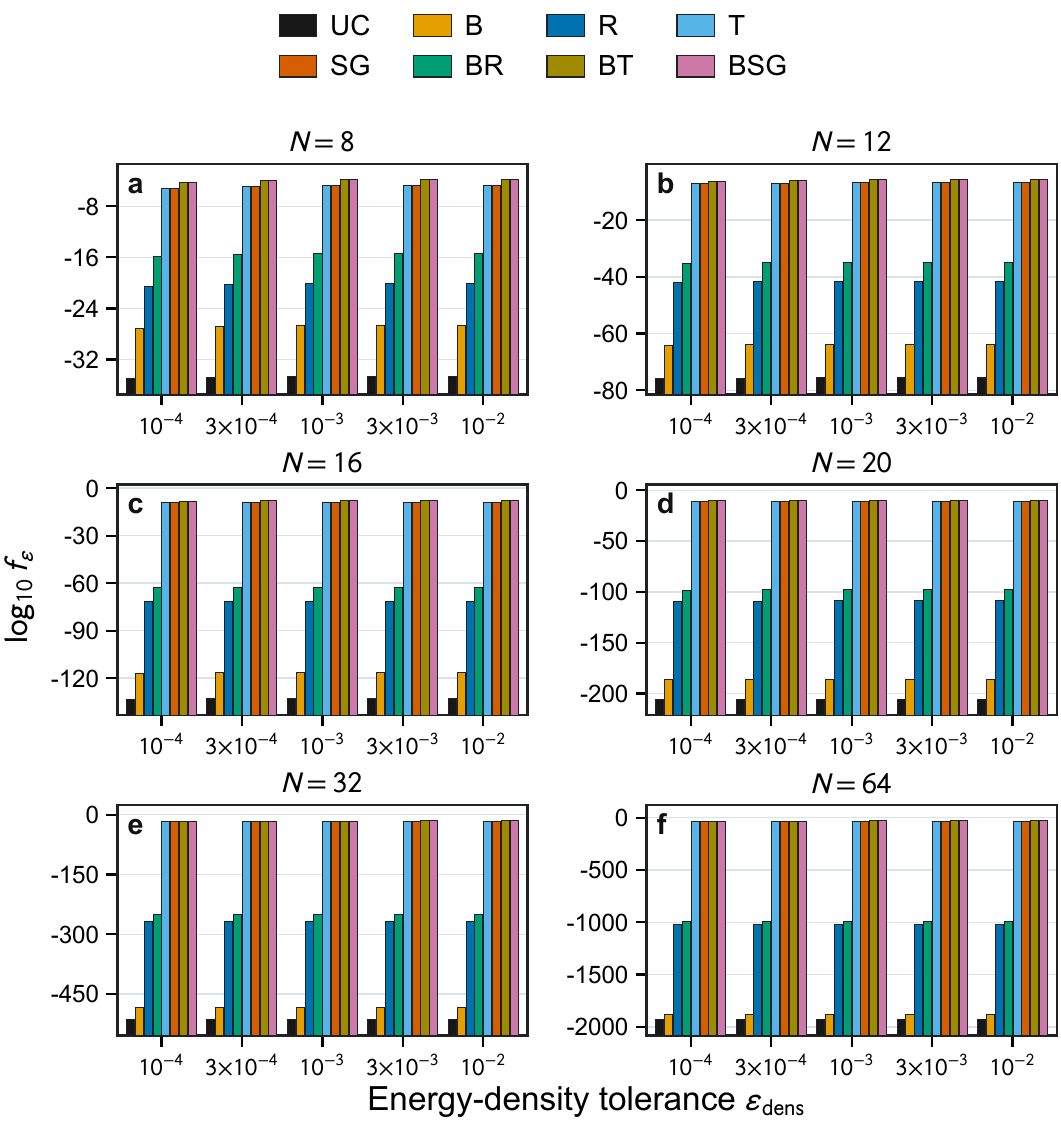}
    \caption{\textbf{Useful expressibility across system size and target tolerance.} The panels show the median useful expressibility $\log_{10}f_{\epsilon}$ for each system size $N$ as a function of the energy-density tolerance $\epsilon_{\mathrm{dens}}$. Here, $f_{\epsilon}=V_{\mathrm{good},+}^{\mathrm{FS}}(\epsilon)/\operatorname{Vol}_{\mathrm{FS}}(\mathcal{M})$ measures the fraction of the globally reachable variational manifold occupied by the local target-accurate region. Bars denote the unconstrained ansatz and the seven symmetry-constrained ansätze, with UC, B, R, T, SG, BR, BT, and BSG corresponding to unconstrained, bit-flip, reflection, translation, space-group, bit-flip--reflection, bit-flip--translation, and bit-flip--space-group constraints, respectively. Values are aggregated across independent seeds using the median. Within each tolerance group, bars are arranged in increasing order of height while their colors remain fixed by symmetry class.}
\label{fig:useful_expressibility_epsilon}
\end{figure*}

\subsection{Symmetry constraints improve training robustness}
\label{subsec:symmetry-training-robustness}

In the preceding subsection, we showed that physically motivated symmetry constraints maintain ground-state accuracy while substantially reducing the parameter count and training time for both the transverse-field Ising and XXZ models. We now test whether these computational gains are accompanied by a measurable regularization of the underlying optimization geometry. We use the target-aware useful-expressibility measures introduced in Definition~\ref{def:trainability_metric} and evaluate the critical transverse-field Ising model at $h=J=1$ for system sizes $N\in\{8,12,16,20,32,64\}$ and energy-density tolerances $\epsilon_{\mathrm{dens}}\in\{10^{-4},3\times10^{-4},10^{-3},3\times10^{-3},10^{-2}\}$. We compare the unconstrained neural quantum state with ansätze constrained by bit-flip, reflection, translation, and space-group symmetries, together with their combined actions. The central geometric result is summarized in Figs.~\ref{fig:useful_expressibility_epsilon} and~\ref{fig:characteristic_basin_width_epsilon}. Figure~\ref{fig:useful_expressibility_epsilon} measures the concentration of target-accurate states within the reachable manifold, whereas Fig.~\ref{fig:characteristic_basin_width_epsilon} measures the characteristic scale of the accurate basin along its retained confining directions. Together, these figures show that \textit{symmetry constraints improve both the global concentration and the local geometry of target-accurate states.}

Figure~\ref{fig:useful_expressibility_epsilon} reports $\log_{10}f_{\epsilon}$, where, by Definition~\ref{def:trainability_metric}, $f_{\epsilon}$ is the fraction of the reachable Fubini--Study volume contained in the connected $\epsilon$-accurate neighborhood of the target stationary state. The numerator and denominator entering each value of $f_{\epsilon}$ are evaluated on the same physical state manifold and therefore have the same intrinsic dimension. The values shown in Fig.~\ref{fig:useful_expressibility_epsilon} consequently represent dimensionless concentrations of target-accurate states within the corresponding reachable manifolds.

Across every system size and tolerance shown in Fig.~\ref{fig:useful_expressibility_epsilon}, each symmetry-constrained ansatz has a larger value of $f_{\epsilon}$ than the unconstrained ansatz. The ordering is also remarkably stable. The unconstrained manifold consistently has the smallest useful-volume fraction, while the bit-flip-constrained manifold remains closest to it. The reflection and bit-flip--reflection manifolds form an intermediate band. The translation, space-group, bit-flip--translation, and bit-flip--space-group manifolds form a clearly separated upper band with the largest useful-volume fractions. The close agreement among the translation-containing sectors indicates that translational invariance supplies the dominant contribution to the increase in target concentration, while the addition of reflection or bit-flip symmetry produces smaller corrections.

The constrained-to-unconstrained separation visible in Fig.~\ref{fig:useful_expressibility_epsilon} increases strongly with $N$, while the ordering changes only weakly as $\epsilon_{\mathrm{dens}}$ is varied from $10^{-4}$ to $10^{-2}$. This behavior shows that the observed separation is neither a finite-size peculiarity nor an artifact of selecting a single accuracy threshold. Instead, the unconstrained ansatz allocates an increasingly small fraction of its reachable physical capacity to the target neighborhood as the system grows. Symmetry-aware coefficient tying reverses this dilution by restricting the reachable manifold to a substantially more target-aligned set of physical states. This geometric concentration is precisely what $f_{\epsilon}$ was designed to resolve and cannot be inferred from the raw parameter count alone.

However, as iterated before, an increase in $f_{\epsilon}$ alone does not establish that the target-accurate region becomes well conditioned along its retained physical directions. Figure~\ref{fig:characteristic_basin_width_epsilon} therefore separates the local basin scale from the normalized global concentration by reporting $\log_{10}R_{\epsilon}$, where $R_{\epsilon}$ is the characteristic $\epsilon$-basin scale defined in Eq.~\eqref{eq:characteristic_basin_scale}.

\begin{figure*}[t]
\centering
\includegraphics[width=0.8\textwidth]{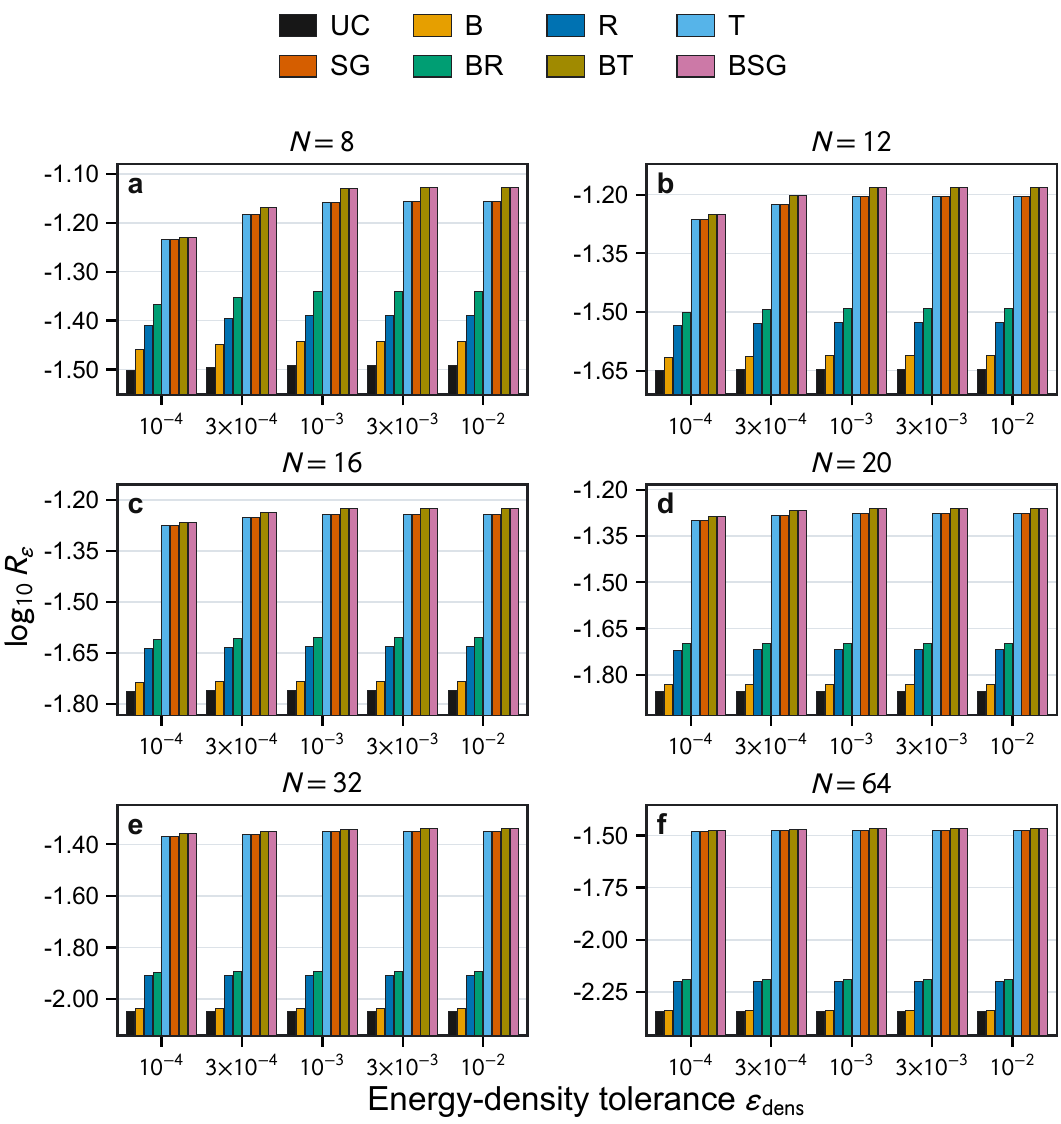}
\caption{\textbf{Characteristic width of the target-accurate parameter basin.} The panels show the median logarithmic characteristic basin width $\log_{10}R_{\epsilon}$ for each system size $N$ as a function of the energy-density tolerance $\epsilon_{\mathrm{dens}}$, where $R_{\epsilon}=\left[V_{\mathrm{good},+}^{\mathrm{FS}}(\epsilon)\right]^{1/r_{+}}$ and $r_{+}$ is the dimension of the positive-curvature physical subspace. This rank-normalized quantity converts the local Fubini--Study volume into an effective linear scale and therefore permits comparison between variational manifolds of different physical dimension. Bars denote the unconstrained ansatz and the seven symmetry-constrained ansätze, with UC, B, R, T, SG, BR, BT, and BSG corresponding to unconstrained, bit-flip, reflection, translation, space-group, bit-flip--reflection, bit-flip--translation, and bit-flip--space-group constraints, respectively. Values are aggregated across independent seeds using the median. Within each tolerance group, bars are arranged in increasing order of height while their colors remain fixed by symmetry class.}
\label{fig:characteristic_basin_width_epsilon}
\end{figure*}

The results in Fig.~\ref{fig:characteristic_basin_width_epsilon} show that the increase in useful expressibility is not produced solely by contraction of the total manifold or by a reduction in the number of confining directions. The unconstrained and bit-flip-only manifolds exhibit the smallest values of $R_{\epsilon}$ and narrow systematically with increasing $N$. Reflection and bit-flip--reflection occupy an intermediate regime. By contrast, the translation-containing constraints retain substantially larger characteristic basin scales throughout the investigated range. In particular, $\log_{10}R_{\epsilon}$ for the translation and space-group sectors remains approximately between $-1.2$ and $-1.5$, whereas the unconstrained value decreases from approximately $-1.5$ at $N=8$ to below $-2.3$ at $N=64$. Thus, after expressing the local good volume on the rank-normalized linear scale defined by $R_{\epsilon}$, the translation-constrained manifolds continue to exhibit broader target-accurate basins than the unconstrained manifold.

The agreement between Figs.~\ref{fig:useful_expressibility_epsilon} and~\ref{fig:characteristic_basin_width_epsilon} is central to the geometric interpretation. Figure~\ref{fig:useful_expressibility_epsilon} shows that symmetry constraints increase the fraction of the reachable manifold occupied by target-accurate states, while Fig.~\ref{fig:characteristic_basin_width_epsilon} shows that the corresponding local basins remain broader after the good volume is converted to the rank-normalized scale $R_{\epsilon}$. The enhancement of $f_{\epsilon}$ therefore cannot be explained by the reduction in positive-curvature rank alone. Instead, symmetry constraints improve the geometry in two distinct ways. They concentrate the reachable manifold around the target neighborhood and reduce the effective stiffness of the confining directions that remain. The common ordering across both figures indicates that translation-containing constraints produce the strongest improvement in both global target concentration and local basin geometry.

The constrained-to-unconstrained separation visible in Fig.~\ref{fig:useful_expressibility_epsilon} contains the combined effects identified in Eq.~\eqref{eq:volume_ratio_punchline_revised}. Global contraction of the constrained manifold, changes in positive-curvature rank, and reconditioning of the retained curvature spectrum all contribute to the observed differences in $\log_{10}f_{\epsilon}$. The useful-volume fraction should therefore not be interpreted as an isolated measure of basin broadening. Local broadening is resolved separately in Fig.~\ref{fig:characteristic_basin_width_epsilon}, where the characteristic basin scale $R_{\epsilon}$ remains substantially larger for translation-containing constraints. The componentwise diagnostics supporting this interpretation are provided in Appendix~\ref{app:geometry-diagnostics}, where we separately resolve the positive-curvature and physical tangent-space contributions underlying the main geometric measures. The persistence of this hierarchy across optimization checkpoints and scalable large-system estimators is established independently in Appendices~\ref{app:geometry-training-robustness} and~\ref{app:sampled-geometry-proxies}.

Taken together, Figs.~\ref{fig:useful_expressibility_epsilon} and~\ref{fig:characteristic_basin_width_epsilon} provide direct empirical support for the useful-expressibility framework. Within the evaluated TFIM family, the investigated tolerance range, and the local positive-curvature construction of Theorem~\ref{thm:good_volume}, symmetry compilation does not merely reduce the number of trainable coordinates. It concentrates the reachable physical manifold around the target state, preserves a larger characteristic basin scale along the retained confining directions, and produces a normalized useful-volume advantage that grows rapidly with system size. The agreement of these geometric diagnostics with the accuracy-preserving runtime improvements established above identifies this reorganization as the geometric mechanism underlying the observed training robustness.

\section{Conclusion}
\label{sec:conclusion}

The main takeaway of this work is that physical symmetries are most useful when they are built into the variational model before optimization begins. For the neural quantum states studied here, imposing constraints using the symmetry of the physical model directly into coefficient space produces substantially smaller ansatze that retain the accuracy of the unconstrained model while requiring far fewer trainable degrees of freedom. More importantly, this compression does not simply remove parameters, rather it reorganizes the reachable physical state space so that the remaining variational capacity is better aligned with the target problem. The geometric framework introduced in this work makes this reorganization quantitative.

To quantify this reorganization, we introduced a target-aware geometric notion of expressibility based on the Fubini--Study volume of the reachable state manifold. The useful fraction $f_\epsilon$ quantifies how much of that manifold lies within a prescribed accuracy of the target, while the characteristic basin scale $R_\epsilon$ resolves the local extent of the same region after accounting for its positive-curvature rank. Taken together, these measures show whether symmetry has genuinely improved the target geometry, rather than merely shrinking the variational space, and they allow constrained and unconstrained ansatze to be compared on the same physical footing.

The numerical results support this picture across the TFIM and XXZ benchmarks. Symmetry-constrained ansatze preserve the relevant ground-state observables and endpoint energy accuracy while using substantially fewer parameters and reducing the computational cost of training. The same hierarchy appears in the geometric diagnostics. Translation- and space-group-based constraints produce the largest increases in both $f_\epsilon$ and $R_\epsilon$, while reflection-based constraints provide a more moderate improvement and bit-flip symmetry alone yields the smallest gain. The separation from the unconstrained ansatz also grows with system size, showing that the geometric advantage of symmetry compilation becomes more pronounced as the variational problem scales.

These gains follow from an explicit symmetry-compilation procedure rather than from ad hoc pruning. The action of the physical symmetry group on local Pauli supports identifies coefficient orbits that can be represented by a common variational coordinate, while Burnside counting determines the number of independent parameters that survive before training begins. This construction, making the compression transparent and reproducible.

More broadly, these results suggest that variational expressibility should be judged by where the reachable states lie, not simply by how many states the ansatz can represent. Symmetry compilation provides a practical way to enforce this principle by concentrating variational freedom within physically relevant sectors before optimization begins. The combination of orbit-based parameter reduction with the geometric diagnostics $f_\epsilon$ and $R_\epsilon$ therefore offers a systematic framework for designing and comparing structured ansatze. Although developed here for Boltzmann-family neural quantum states, the same viewpoint is applicable to other variational representations in which symmetries can be compiled into the model coordinates.

The framework developed here opens a broader route toward symmetry-aware variational learning beyond the specific NQS constructions studied in this work. The same geometric perspective can be extended to higher-dimensional lattices, non-Abelian and internal symmetries, excited-state problems, and time-dependent variational dynamics. More generally, combining symmetry compilation with target-aware geometric diagnostics provides a principled strategy for designing ansätze whose expressibility is concentrated where it is physically useful. This shifts ansatz design away from maximizing raw variational freedom and toward organizing that freedom around the structure of the target problem.

\section{Data Availability}
The source code and run configurations used to generate the reported results are available from the authors upon reasonable request.

\section{Acknowledgements}
This work was supported by the Quantum Science Center, a National Quantum Information Science Research Center of the U.S. Department of Energy (DOE), operated at Oak Ridge National Laboratory (ORNL). EL and BNB were supported by the U.S. Department of Energy, Advanced Scientific Computing Research, under contract number DE-SC0025384. M.S would like to acknowledge the use of resources of the Oak Ridge Leadership Computing Facility at the Oak Ridge
National Laboratory, which is supported by the Office of Science of the U.S. Department of Energy under Contract No. DE-AC05-00OR22725. This manuscript has in part been authored by UT-Battelle, LLC under Contract No. DE-AC05-
00OR22725 with the U.S. Department of Energy. The United States Government retains
and the publisher, by accepting the article for publication, acknowledges that the U.S.
Government retains a non-exclusive, paid up, irrevocable, world-wide license to publish
or reproduce the published form of the manuscript, or allow others to do so, for U.S.
Government purposes. The Department of Energy will provide public access to these
results of federally sponsored research in accordance with the DOE Public Access Plan
\href{http://energy.gov/downloads/doe-publicaccess-plan}{(see link)}.


\bibliography{ref}

\appendix
\onecolumngrid

\section{Proofs of Theorems in Section~\ref{sec:trainability_nqs}}
\label{app:jacobian_proofs}


\begin{proof}[Proof of Lemma~\ref{lem:uncon_jacobian}]
The unconstrained amplitude is $\psi(\vec z;\vec c,\vec d) = \mathcal Z(\vec c)^{-1/2} e^{-\frac{\beta}{2}h(\vec c,\vec z)} e^{-ig(\vec d,\vec z)}$. From $h(\vec c,\vec z) = \sum_{\vec\alpha}(-1)^{\vec\alpha\cdot\vec z}c_{\vec\alpha}$ and $g(\vec d,\vec z) = \sum_{\vec\alpha}(-1)^{\vec\alpha\cdot\vec z}d_{\vec\alpha}$,
\begin{equation}
\frac{\partial h(\vec c,\vec z)}{\partial c_{\vec\beta}} = (-1)^{\vec\beta\cdot\vec z}, \qquad \frac{\partial g(\vec d,\vec z)}{\partial c_{\vec\beta}} = 0.
\end{equation}
Differentiating $\psi$ with respect to $c_{\vec\beta}$ by the product rule,
\begin{equation}
\label{eq:app_psi_deriv}
\frac{\partial\psi(\vec z;\vec c,\vec d)}{\partial c_{\vec\beta}} = \frac{\partial\mathcal Z(\vec c)^{-1/2}}{\partial c_{\vec\beta}}\,e^{-\frac{\beta}{2}h(\vec c,\vec z)}e^{-ig(\vec d,\vec z)} + \mathcal Z(\vec c)^{-1/2}\frac{\partial e^{-\frac{\beta}{2}h(\vec c,\vec z)}}{\partial c_{\vec\beta}}e^{-ig(\vec d,\vec z)}.
\end{equation}
The second term evaluates directly to $-\frac{\beta}{2}(-1)^{\vec\beta\cdot\vec z}\psi(\vec z;\vec c,\vec d)$. For the first term, differentiating the partition function $\mathcal Z(\vec c) = \sum_{\vec z'}e^{-\beta h(\vec c,\vec z')}$ gives
\begin{equation}
\frac{1}{\mathcal Z(\vec c)}\frac{\partial\mathcal Z(\vec c)}{\partial c_{\vec\beta}} = -\beta\sum_{\vec z'}\frac{e^{-\beta h(\vec c,\vec z')}}{\mathcal Z(\vec c)}(-1)^{\vec\beta\cdot\vec z'} = -\beta\big\langle(-1)^{\vec\beta\cdot\vec z}\big\rangle_\psi,
\end{equation}
where $\langle\cdot\rangle_\psi$ is the expectation over the Born distribution $|\psi(\vec z;\vec c,\vec d)|^2$. Since $\partial_{c_{\vec\beta}}\mathcal Z^{-1/2} = -\tfrac12\mathcal Z^{-3/2}\partial_{c_{\vec\beta}}\mathcal Z$, the first term of Eq.~\eqref{eq:app_psi_deriv} becomes $\frac{\beta}{2}\psi(\vec z;\vec c,\vec d)\big\langle(-1)^{\vec\beta\cdot\vec z}\big\rangle_\psi$. Combining both terms gives
\begin{equation}
\frac{\partial\psi(\vec z;\vec c,\vec d)}{\partial c_{\vec\beta}} = -\frac{\beta}{2}\psi(\vec z;\vec c,\vec d)\Big[(-1)^{\vec\beta\cdot\vec z} - \big\langle(-1)^{\vec\beta\cdot\vec z}\big\rangle_\psi\Big].
\end{equation}

For the phase derivative, $h(\vec c,\vec z)$ and $\mathcal Z(\vec c)$ are independent of $d_{\vec\beta}$, while $\partial g(\vec d,\vec z)/\partial d_{\vec\beta} = (-1)^{\vec\beta\cdot\vec z}$, so
\begin{equation}
\frac{\partial\psi(\vec z;\vec c,\vec d)}{\partial d_{\vec\beta}} = \mathcal Z(\vec c)^{-1/2}e^{-h(\vec c,\vec z)}\frac{\partial e^{-ig(\vec d,\vec z)}}{\partial d_{\vec\beta}} = -i(-1)^{\vec\beta\cdot\vec z}\psi(\vec z;\vec c,\vec d),
\end{equation}
which establishes both Jacobian identities.
\end{proof}


\begin{proof}[Proof of Lemma~\ref{lem:con_jacobian}]
For a generic real parameter $\theta \in \{\xi_a^R,\xi_a^I\}$, the same product-rule expansion as above gives the general identity
\begin{equation}
    \label{eq:app_general_constraint_id}
    \frac{\partial\psi(\vec z;\vec\xi)}{\partial\theta} = -\psi(\vec z;\vec\xi)\left[\frac{\beta}{2}\left(\frac{\partial h(\vec\xi,\vec z)}{\partial\theta} - \left\langle\frac{\partial h}{\partial\theta}\right\rangle_\psi\right) + i\,\frac{\partial g(\vec\xi,\vec z)}{\partial\theta}\right],
\end{equation}
where $\mathcal Z(\vec\xi)^{-1}\partial_\theta\mathcal Z(\vec\xi)=-\beta\langle\partial_\theta h\rangle_\psi$ follows exactly as in the unconstrained case, now applied to $\mathcal Z(\vec\xi) = \sum_{\vec z'}e^{-\beta h(\vec\xi,\vec z')}$.

Setting $\theta = \xi_a^R$, the constrained eigenvalues $h(\vec\xi,\vec z) = \sum_a(A_{\vec z,a}^R\xi_a^R - A_{\vec z,a}^I\xi_a^I)$ and $g(\vec\xi,\vec z) = \sum_a(A_{\vec z,a}^R\xi_a^I + A_{\vec z,a}^I\xi_a^R)$ give
\begin{equation}
    \frac{\partial h(\vec\xi,\vec z)}{\partial\xi_a^R} = A_{\vec z,a}^R, \qquad \frac{\partial g(\vec\xi,\vec z)}{\partial\xi_a^R} = A_{\vec z,a}^I,
\end{equation}
so that $\langle\partial_{\xi_a^R}h\rangle_\psi = \langle A_a^R\rangle_\psi$. Substituting into Eq.~\eqref{eq:app_general_constraint_id},
\begin{equation}
    \frac{\partial\psi(\vec z;\vec\xi)}{\partial\xi_a^R} = -\psi(\vec z;\vec\xi)\left[\frac{\beta}{2}\Big(A_{\vec z,a}^R - \langle A_a^R\rangle_\psi\Big) + iA_{\vec z,a}^I\right].
\end{equation}

Setting $\theta = \xi_a^I$ instead gives $\partial_{\xi_a^I}h = -A_{\vec z,a}^I$ and $\partial_{\xi_a^I}g = A_{\vec z,a}^R$, hence $\langle\partial_{\xi_a^I}h\rangle_\psi = -\langle A_a^I\rangle_\psi$, and Eq.~\eqref{eq:app_general_constraint_id} yields
\begin{equation}
    \frac{\partial\psi(\vec z;\vec\xi)}{\partial\xi_a^I} = -\psi(\vec z;\vec\xi)\left[-\frac{\beta}{2}\Big(A_{\vec z,a}^I - \langle A_a^I\rangle_\psi\Big) + iA_{\vec z,a}^R\right].
\end{equation}
This proves both constrained Jacobian identities.
\end{proof}


\begin{proof}[Proof of Lemma~\ref{lem:hessian}]
Differentiating $E(\vec\theta) = \langle\psi(\vec\theta)|\mathcal H|\psi(\vec\theta)\rangle$ once with respect to $\theta_\mu$,
\begin{equation}
    \frac{\partial E}{\partial\theta_\mu} = \langle\partial_\mu\psi|\mathcal H|\psi\rangle + \langle\psi|\mathcal H|\partial_\mu\psi\rangle = 2\,\mathrm{Re}\big[\langle\partial_\mu\psi|\mathcal H|\psi\rangle\big],
\end{equation}
using $\langle\psi|\mathcal H|\partial_\mu\psi\rangle = \big[\langle\partial_\mu\psi|\mathcal H|\psi\rangle\big]^*$, which follows from $\mathcal H = \mathcal H^\dagger$. Differentiating again with respect to $\theta_\nu$,
\begin{equation}
    \frac{\partial^2 E}{\partial\theta_\mu\partial\theta_\nu} = \langle\partial_\mu\partial_\nu\psi|\mathcal H|\psi\rangle + \langle\partial_\mu\psi|\mathcal H|\partial_\nu\psi\rangle + \langle\partial_\nu\psi|\mathcal H|\partial_\mu\psi\rangle + \langle\psi|\mathcal H|\partial_\mu\partial_\nu\psi\rangle,
\end{equation}
and pairing conjugate terms via Hermiticity of $\mathcal H$ gives
\begin{equation}
    \frac{\partial^2 E}{\partial\theta_\mu\partial\theta_\nu} = 2\,\mathrm{Re}\big[\langle\partial_\mu\psi|\mathcal H|\partial_\nu\psi\rangle + \langle\partial_\mu\partial_\nu\psi|\mathcal H|\psi\rangle\big].
\end{equation}
Writing $J$ for the Jacobian with columns $|\partial_\mu\psi\rangle$ and $R_{\mu\nu} = \langle\partial_\mu\partial_\nu\psi|\mathcal H|\psi\rangle$, this is compactly $\nabla^2_{\vec\theta}E = 2\,\mathrm{Re}[J^\dagger\mathcal H J + R]$, as claimed.
\end{proof}


\begin{proof}[Proof of Theorem~\ref{thm:good_volume}]
Let $S_*=V\Lambda V^\top$ be the spectral decomposition on its rank-$r_{\rm loc}$ support, with $V\in\mathbb R^{d\times r_{\rm loc}}$, and introduce metric-normalized tangent coordinates $\vec y=\Lambda^{1/2}V^\top d\vec\theta$; in these coordinates the physical curvature is represented by $\widetilde K_*=\Lambda^{-1/2}V^\top M_*V\Lambda^{-1/2}\in\mathbb R^{r_{\rm loc}\times r_{\rm loc}}$, whose nonzero eigenvalues coincide with those of the parameter-space matrix $K_*$ in Eq.~\eqref{eq:normalized_curvature}, and we henceforth identify $K_*$ with this induced operator on $T_{\vec x_*}\mathcal M$.

Let $\{\vec e_i\}_{i\in\mathcal I_+}$ be a Fubini--Study-orthonormal eigenbasis of the positive-curvature physical subspace $\mathcal T_+$, where $K_*\vec e_i=\mu_i\vec e_i$ with $\mu_i>\tau_{\rm H}$. Any tangent vector $\vec u\in\mathcal T_+$ can be written as $\vec u=\sum_{i\in\mathcal I_+}y_i\vec e_i$, where $\vec y\in\mathbb R^{r_+}$.

Using the Fubini--Study exponential map, write a nearby physical state as $\vec x=\exp_{\vec x_*}(\vec u)$. Since $\vec x_*$ is stationary, the energy expansion restricted to $\mathcal T_+$ is
\begin{equation}
    E\!\left(\exp_{\vec x_*}(\vec u)\right)-E(\vec x_*)=\frac{1}{2}\langle\vec u,K_*\vec u\rangle_{\rm FS}+O(\|\vec u\|_{\rm FS}^3)=\frac{1}{2}\sum_{i\in\mathcal I_+}\mu_i y_i^2+O(\|\vec y\|^3).
\end{equation}
Keeping terms to quadratic order, the condition defining the local $\epsilon$-good region becomes
\begin{equation}
    \frac{1}{2}\sum_{i\in\mathcal I_+}\mu_i y_i^2\le\epsilon,
\end{equation}
or equivalently
\begin{equation}
\sum_{i\in\mathcal I_+}\mu_i y_i^2\le2\epsilon.
\end{equation}
Because $K_*$ is positive definite on $\mathcal T_+$, this condition defines an $r_+$-dimensional ellipsoid with semi-axis lengths
\begin{equation}
    a_i=\sqrt{\frac{2\epsilon}{\mu_i}},\qquad i\in\mathcal I_+.
\end{equation}
In the Fubini--Study-orthonormal coordinates $\vec y$, the induced volume element satisfies
\begin{equation}
{\rm dVol}_{\rm FS}=\left[1+o(1)\right]\prod_{i\in\mathcal I_+}dy_i
\end{equation}
as $\epsilon\to0^+$. Therefore, the Fubini--Study volume of the local good region is the volume of the $r_+$-dimensional unit ball multiplied by the product of the ellipsoid semi-axis lengths:
\begin{align}
    V_{{\rm good},+}^{\rm FS}(\vec x_*;\epsilon)&\equiv\operatorname{Vol}_{\rm FS}\!\left[\mathcal G_\epsilon(\vec x_*)\cap\exp_{\vec x_*}(\mathcal T_+)\right]\nonumber\\
    &=\frac{\pi^{r_+/2}}{\Gamma(r_+/2+1)}\prod_{i\in\mathcal I_+}a_i\left[1+o(1)\right]\nonumber\\
    &=\frac{\pi^{r_+/2}}{\Gamma(r_+/2+1)}\frac{(2\epsilon)^{r_+/2}}{\sqrt{\prod_{i\in\mathcal I_+}\mu_i}}\left[1+o(1)\right].
\end{align}
Using $\det\nolimits_+K_*=\prod_{i\in\mathcal I_+}\mu_i$, we obtain
\begin{equation}
    V_{{\rm good},+}^{\rm FS}(\vec x_*;\epsilon)=\frac{\pi^{r_+/2}}{\Gamma(r_+/2+1)}\frac{(2\epsilon)^{r_+/2}}{\sqrt{\det\nolimits_+K_*}}\left[1+o(1)\right]
\end{equation}
as $\epsilon\to0^+$, which proves the result.
\end{proof}

\section{Fidelity and Real-Space Correlation Diagnostics}
\label{app:fidelity-correlation-probes}

While low variational energy indicates algorithmic convergence, it does not strictly guarantee correct wavefunction overlap or real-space ordering. To verify that the symmetry-projected manifolds capture the physical target states, we supplement the main energy diagnostics with the exact wavefunction infidelity $1-\mathcal F$ and real-space two-point correlation functions for both the TFIM and XXZ benchmarks. 

For a periodic chain of length $N$, the translation-averaged spin correlator is
\begin{equation}
C_{ZZ}(r)=\frac{1}{N}\sum_{i=1}^{N}\langle Z_i Z_{i+r}\rangle.
\end{equation}
For the ferromagnetic TFIM convention $H=-J\sum_i Z_iZ_{i+1}-h\sum_iX_i$ with $J>0$, the low-field longitudinal correlations do not alternate in sign. We therefore use the translation-averaged correlator $C_{ZZ}(r)$ directly. For the XXZ model, we use the corresponding correlation profile $C(r)$. To provide a compact scalar probe of the learned real-space structure across the phase sweeps, we define the distance-averaged correlation strength
\begin{equation}
\overline{C}=\frac{1}{R}\sum_{r=1}^{R}C(r),
\end{equation}
where $R$ is the maximum measured spatial separation and $C(r)=C_{ZZ}(r)$ for the TFIM.

\begin{figure*}[t]
    \centering
    \includegraphics[width=\textwidth, trim={0 0 0 0}, clip]{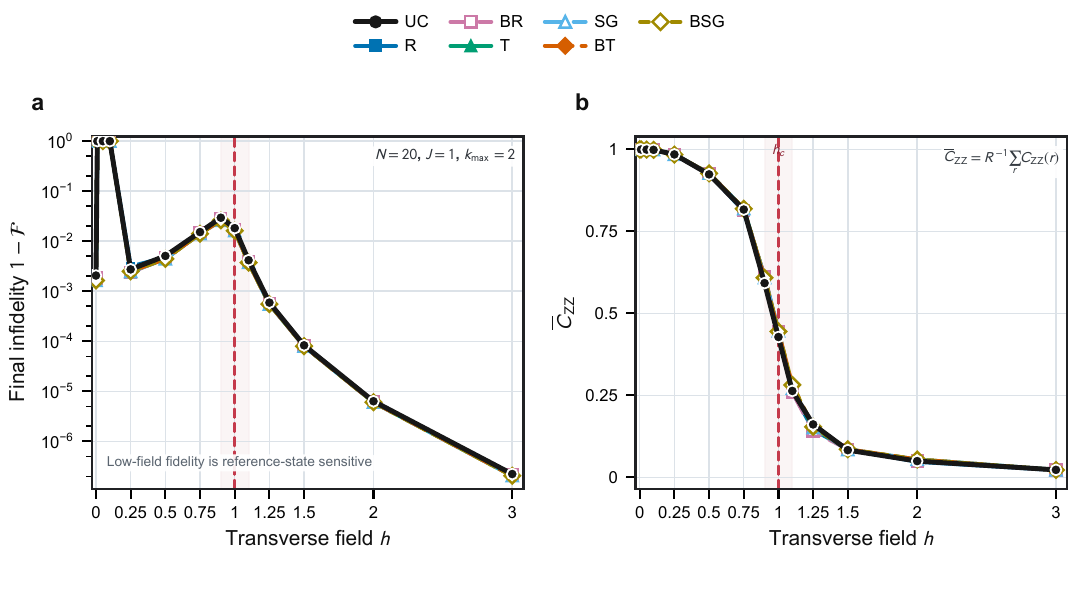}
    \caption{\textbf{Critical-field correlation probe for the TFIM at $h/J=1$.} \textbf{a}, The endpoint longitudinal correlation profile $C_{ZZ}(r)$ demonstrates that symmetry tying preserves the learned spatial correlations across all measured separations. \textbf{b}, The training-time growth of $\overline{C}_{ZZ}$ confirms that these correlations are built dynamically during optimization.}
    \label{fig:app-nowarm-fid-cstag-sweep}
\end{figure*}

\begin{figure*}[t]
    \centering
    \includegraphics[width=\textwidth, trim={0 0 0 0}, clip]{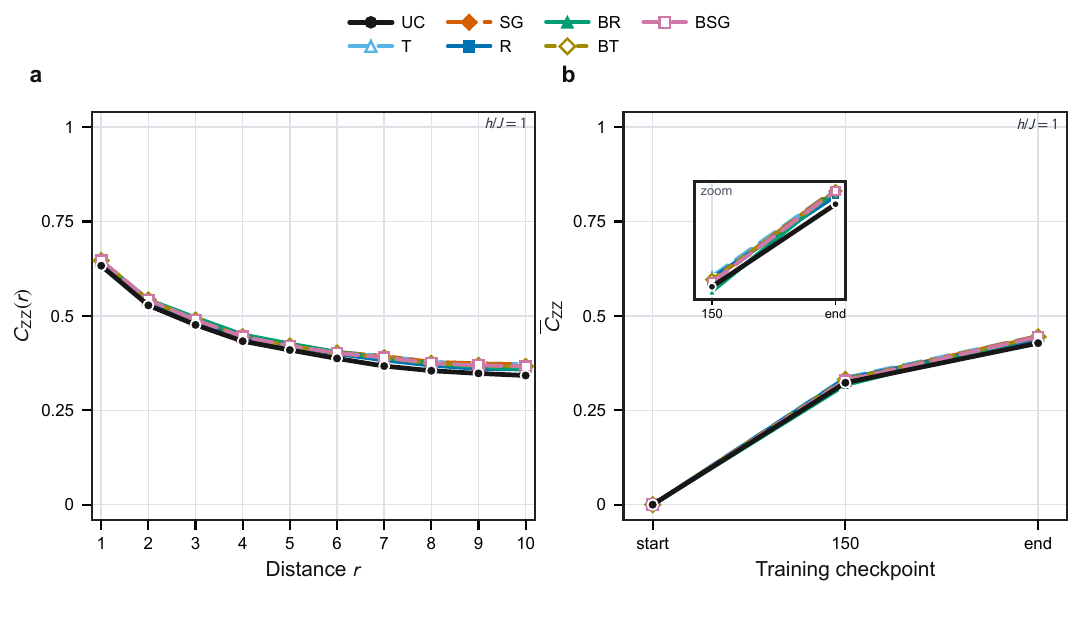}
    \caption{Critical-field correlation probe for the TFIM at $h/J=1$. (a) The endpoint staggered correlation profile $C_{\rm stag}(r)$ demonstrates that symmetry tying perfectly preserves the learned spatial order across all separations. (b) The training-time growth of $\overline C_{\rm stag}$ confirms these correlations are built dynamically during optimization.}
    \label{fig:app-nowarm-cstag-profile-growth}
\end{figure*}

For the TFIM (Figs.~\ref{fig:app-nowarm-fid-cstag-sweep} and~\ref{fig:app-nowarm-cstag-profile-growth}), the infidelity and spatial correlation diagnostics for the constrained networks are virtually indistinguishable from the unconstrained baseline. The localized elevation in $1-\mathcal F$ deep in the ordered phase is a known artifact of evaluating scalar fidelity against a single reference vector within a nearly degenerate subspace; it penalizes physically identical states that differ merely by relative phase within the doublet. The agreement of the longitudinal correlation observable $\overline{C}_{ZZ}$ supports this interpretation. At the critical field $h/J=1$, the spatial profiles $C_{\rm stag}(r)$ overlap perfectly, demonstrating that symmetry projection does not eliminate the variational directions required to capture extended spatial structure.

\begin{figure*}[t]
    \centering
    \includegraphics[width=\textwidth, trim={0 0 0 0}, clip]{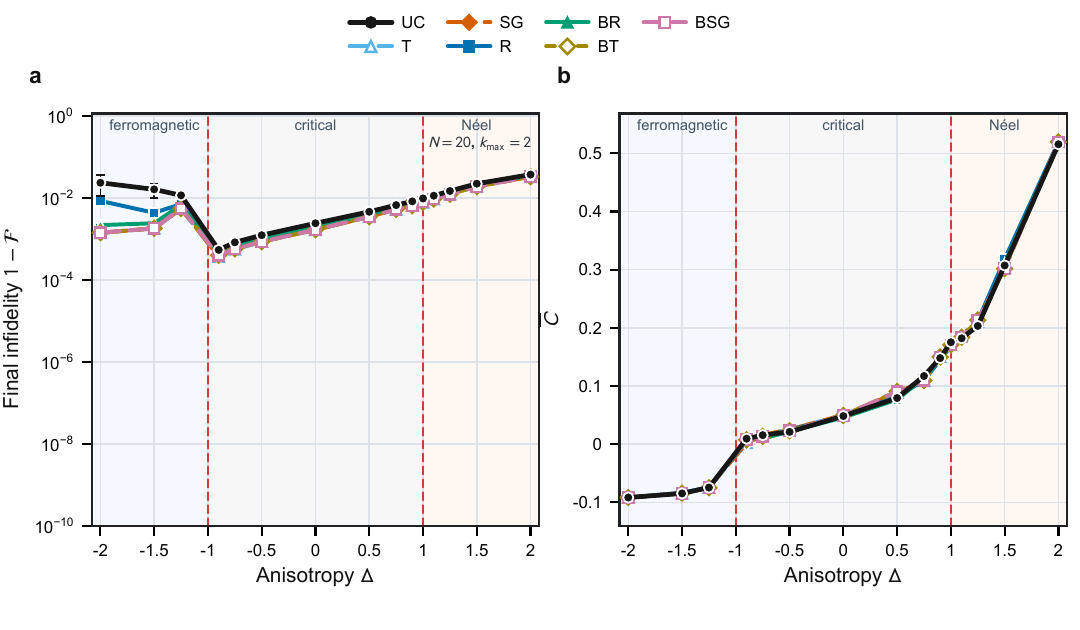}
    \caption{Fidelity and correlation diagnostics across the XXZ anisotropy sweep. (a) Final infidelity $1-\mathcal F$ and (b) distance-averaged correlation strength $\overline C$. The symmetry-constrained curves remain indistinguishable from the baseline, confirming no systematic loss of wavefunction accuracy or correlation structure.}
    \label{fig:app-xxz-nowarm-fid-corr-sweep}
\end{figure*}

\begin{figure*}[t]
    \centering
    \includegraphics[width=\textwidth, trim={0 0 0 0}, clip]{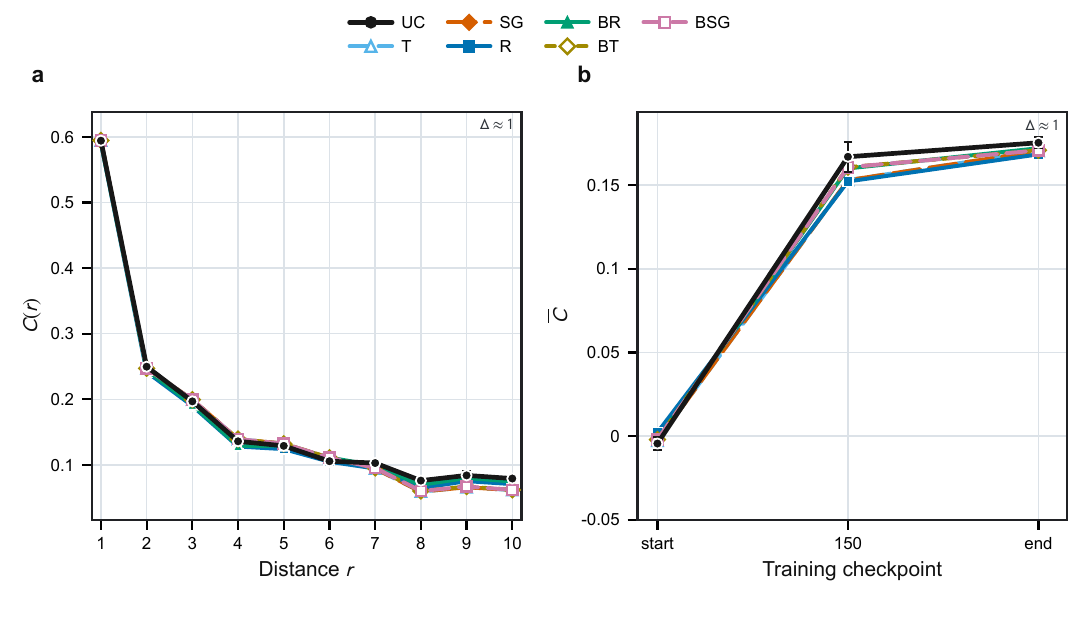}
    \caption{Correlation diagnostics near the XXZ isotropic point ($\Delta\simeq1$). (a) The endpoint spatial profile $C(r)$ and (b) the training-time evolution of $\overline C$ show identical behavior across all ansätze. The compressed manifolds retain the full capacity to build the physical correlation structure during training.}
    \label{fig:app-xxz-nowarm-corr-profile-growth}
\end{figure*}

This performance strictly translates to the XXZ anisotropy sweep (Figs.~\ref{fig:app-xxz-nowarm-fid-corr-sweep} and~\ref{fig:app-xxz-nowarm-corr-profile-growth}). The exact match in $1-\mathcal F$ across varying $\Delta$ proves that the symmetry constraints do not bottleneck global wavefunction accuracy. Resolving the correlations near the isotropic point ($\Delta \simeq 1$) shows that the full spatial decay profile $C(r)$ and its dynamical growth during training are fully preserved. Across both models, parameter tying successfully excises structurally redundant degrees of freedom without degrading the physical fidelity or the real-space observable structure of the target state.

\section{Large-System Sampled Geometry Proxies}
\label{app:sampled-geometry-proxies}

Computing the dense Jacobian, exact Hessian, or full Fubini--Study (FS) matrix is intractable for large-scale systems. To probe the endpoint optimization geometry in this regime, we employ scalable Monte Carlo proxies derived directly from the optimized variational state. For a normalized state $\psi_\theta(\sigma)$ parameterized by real variables $\theta_a$, we sample the log-derivative observables $O_a(\sigma)=\partial_{\theta_a}\log \psi_\theta(\sigma)$ from $|\psi_\theta(\sigma)|^2$. The exact FS metric is the real part of the covariance matrix,
\begin{equation}
    S_{ab}=\mathrm{Re}\left[\langle O_a^*O_b\rangle-\langle O_a^*\rangle\langle O_b\rangle\right].
\end{equation}

Rather than explicitly diagonalizing $S$, we bypass dense matrix operations by computing three scalar proxies for the active tangent-space dimension. We define the active set of diagonal entries $\mathcal{A} = \{a : S_{aa} > \tau\}$ for a numerical rank tolerance $\tau$, yielding the diagonal active-direction count $r_{\mathrm{diag}} = |\mathcal{A}|$. The effective dimension is further quantified by the entropy rank of the positive diagonal entries,
\begin{equation}
    r_{\mathrm{eff}}^{\mathrm{diag}}=\exp\left[-\sum_a p_a\log p_a\right],\quad p_a=\frac{S_{aa}}{\sum_b S_{bb}}.
\end{equation}
To capture the global conditioning without full diagonalization, we estimate the stable rank via randomized trace and Frobenius-norm probes,
\begin{equation}
    r_{\mathrm{stable}}=\frac{\mathrm{tr}(S)^2}{\|S\|_F^2}.
\end{equation}

To track the local training landscape, we define a local useful-volume fraction by sampling random perturbations $\delta\theta$ around the optimized endpoint $\theta_*$. This evaluates the fraction of perturbed states that remain within a strict energy tolerance $\epsilon$,
\begin{equation}
    f_{\mathrm{good}}^{\mathrm{MC}}=\frac{1}{M}\sum_{m=1}^{M}\mathbf 1\left[E(\theta_*+\delta\theta_m)-E(\theta_*)\leq \epsilon\right].
\end{equation}

\begin{figure*}[t]
    \centering
    \includegraphics[width=\textwidth, trim={0 0 0 0}, clip]{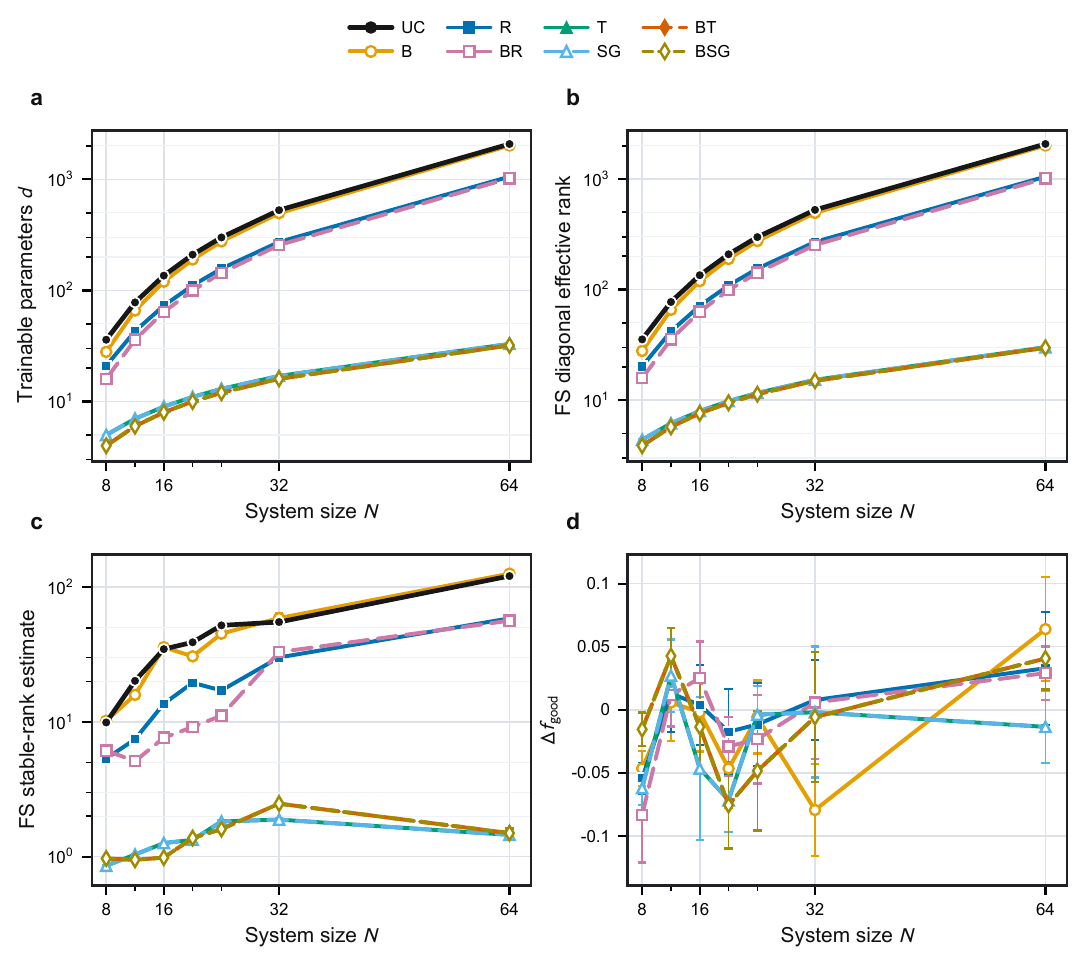}
    \caption{\textbf{Sampled large-system geometry proxies.} \textbf{a}, Trainable parameter count $d$ for all-to-all $k_{\max}=2$ diagonal NQS. \textbf{b}, FS effective-rank proxy ($r_{\mathrm{eff}}^{\mathrm{diag}}$). \textbf{c}, Randomized FS stable-rank estimate ($r_{\mathrm{stable}}$). \textbf{d}, Relative change in the Monte Carlo useful-volume fraction ($f_{\mathrm{good}}^{\mathrm{MC}}$) compared to unconstrained baselines. These scalable proxies bypass dense exact calculations while confirming geometric regularization at scale. Labels: UC (unconstrained); B (bitflip); R (reflection); BR (bitflip + reflection); T (translations); SG (space group); BT (bitflip + translations); BSG (bitflip + space group).}
    \label{fig:sampled-geometry-proxies}
\end{figure*}

As shown in Fig.~\ref{fig:sampled-geometry-proxies}, these scalable proxies directly reproduce the exact, small-system geometry scaling. The diagonal FS effective rank drops commensurately with the parameter count hierarchy, confirming that symmetry compilation analytically prunes fundamentally active tangent directions rather than merely redefining coordinate labels. Furthermore, the stable rank remains orders of magnitude smaller than the effective rank, revealing a highly anisotropic tangent space where metric weight is concentrated in a tight subset of directions. 

Crucially, the useful-volume fraction $f_{\mathrm{good}}^{\mathrm{MC}}$ remains strictly stable relative to the unconstrained baseline. If symmetry projection artificially overconstrained the variational family, this fraction would collapse. Instead, these large-system diagnostics are consistent with structural symmetry acting as a geometric regularizer by reducing the sampled tangent-space dimension without producing a systematic collapse of the locally acceptable fraction.

\section{Across-Seed and Along-Trajectory Robustness}
\label{app:geometry-training-robustness}

The main-text scaling analysis reveals a consistent geometric hierarchy among the weakly reduced $\{\mathrm{UC},\mathrm{B}\}$, reflection-reduced $\{\mathrm{R},\mathrm{BR}\}$, and strongly compressed spatial-symmetry families $\{\mathrm{T},\mathrm{SG},\mathrm{BT},\mathrm{BSG}\}$. To test whether this separation is robust to stochastic initialization and persists during optimization, we analyze the geometry across multiple seeds and training checkpoints. We evaluate the $N=64$ critical TFIM at $J=h=1$ and $k_{\max}=2$ using 50 independent random seeds for each of the eight ansatz families. Geometry diagnostics are extracted at initialization (epoch 0), the optimization midpoint (epoch 150), and the final iterate (epoch 300), yielding 1200 checkpoint evaluations from 400 independent optimization trajectories.

\begin{figure*}[t]
    \centering
    \includegraphics[width=\textwidth, trim={0 0 0 0}, clip]{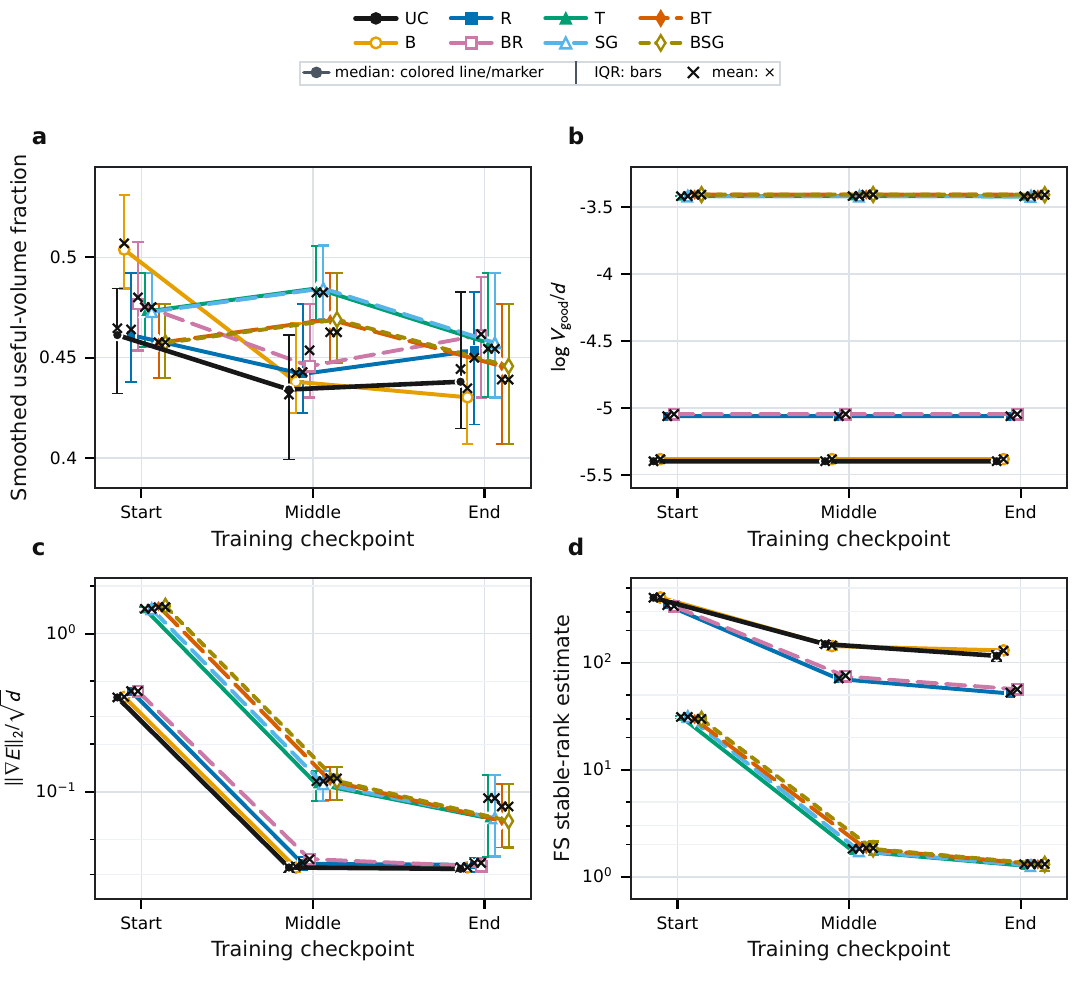}
    \caption{\textbf{Across-seed and along-trajectory geometry robustness at $N=64$.} Each ansatz family is trained over $50$ independent seeds. Colored markers and lines denote the median, vertical bars span the interquartile range, and black crosses indicate the arithmetic mean. Checkpoints correspond to initialization, training midpoint, and final iterate. Panels display (a) the smoothed useful-volume fraction, (b) the dimension-normalized log useful volume, (c) the dimension-normalized gradient norm $\lVert \nabla E\rVert_2/\sqrt{d}$, and (d) the randomized Fubini--Study stable rank. The rigid separation between the weakly reduced $\{\mathrm{UC},\mathrm{B}\}$, intermediate $\{\mathrm{R},\mathrm{BR}\}$, and strongly compressed $\{\mathrm{T},\mathrm{SG},\mathrm{BT},\mathrm{BSG}\}$ sectors persists uniformly across random seeds and optimization time, confirming the structural hierarchy identified in the system-size scaling limits. Labels: UC (unconstrained); B (bitflip); R (reflection); BR (bitflip + reflection); T (translations); SG (space group); BT (bitflip + translations); BSG (bitflip + space group).}
    \label{fig:geometry-training-robustness}
\end{figure*}

As shown in Fig.~\ref{fig:geometry-training-robustness}, the trajectory-resolved data reproduce the broad hierarchy observed in the system-size analysis. First, the randomized Fubini--Study stable rank [Fig.~\ref{fig:geometry-training-robustness}(d)] maintains the established symmetry hierarchy from initialization through convergence. Although the absolute rank decreases for all families, the separation between the symmetry sectors remains visible throughout optimization. Second, the dimension-normalized log useful volume $\log V_{\mathrm{good}}/d$ [Fig.~\ref{fig:geometry-training-robustness}(b)] exhibits identically strong partitioning. The strict temporal persistence of this metric demonstrates that the local useful volume is fundamentally dictated by the static symmetry constraints defining the manifold, not merely generated dynamically upon reaching a specific ground state.

Crucially, this geometric persistence occurs alongside genuine parameter evolution. The dimension-normalized gradient norm $\lVert \nabla E\rVert_2/\sqrt{d}$ [Fig.~\ref{fig:geometry-training-robustness}(c)] drops by over an order of magnitude from initialization to the midpoint across all families. Consequently, the surviving hierarchy reflects persistent geometric differences between the constrained manifolds along active and structurally distinct optimization trajectories.

Finally, the sampled estimates exhibit stable summary statistics across the evaluated seeds and checkpoints. To quantify the stability of the Monte Carlo volume estimator [Fig.~\ref{fig:geometry-training-robustness}(a)], we define the normalized mean--median displacement
\begin{equation}
    D_{s,t} = \frac{\left| \overline{x}_{s,t} - \widetilde{x}_{s,t} \right|}{Q_{75}^{(s,t)} - Q_{25}^{(s,t)}},
\end{equation}
where $s$ and $t$ index the ansätze family and training checkpoint, respectively. Across all $24$ evaluated scenario--checkpoint cells, the median displacement $D_{s,t}$ lies between $0.06$ and $0.09$, with a global maximum below $0.29$. The arithmetic mean lies within the interquartile range in every evaluated cell. These ensemble statistics support the robustness of the reported geometric hierarchy across the sampled seeds and checkpoints and are consistent with a structural contribution from the imposed symmetry constraints. The arithmetic mean universally resides deep within the interquartile range. These ensemble statistics prove that the geometric compression mechanisms reported in the main text are strictly reproducible, trajectory-independent, and structurally inherent to the symmetry-projected variational spaces.

\section{Diagnostic decomposition of the target-aware geometric results}
\label{app:geometry-diagnostics}

The main text summarizes the target-aware variational geometry through two complementary quantities. Definition~\ref{def:trainability_metric} measures the global concentration of target-accurate states through the dimensionless useful-volume fraction $f_{\epsilon}$, while Eq.~\eqref{eq:characteristic_basin_scale} measures the characteristic local scale $R_{\epsilon}$ obtained by converting the volume of the same accurate region into a volume-equivalent linear quantity. The purpose of this Appendix is to offer supplementart diagnostics for the intermediate geometric ingredients entering these two quantities and rule out simpler explanations of the hierarchy reported in Figs.~\ref{fig:useful_expressibility_epsilon} and~\ref{fig:characteristic_basin_width_epsilon}.

For any diagnostic quantity $X$ and symmetry sector $s$, we use the unconstrained-relative difference
\begin{equation}\label{eq:appendix_uc_difference}\Delta_{\rm UC}X_s(N)\equiv X_s(N)-X_{\rm UC}(N).\end{equation}
A positive value therefore indicates an increase relative to the unconstrained ansatz at the same system size. These differences are used only to expose the components entering the main geometric measures. They are not interpreted as wall-time speedups, optimizer success probabilities, or counts of solutions.

\subsection{Decomposition of the local positive-curvature geometry}
\label{app:local-hessian-diagnostics}

The first diagnostic resolves the two spectral contributions entering Theorem~\ref{thm:good_volume}. At a stationary physical minimum, the local good volume depends on both the positive-curvature rank $r_+$ and the physical pseudodeterminant $\det_+K_*$. The former determines the dimension and tolerance exponent of the locally confined region, while the latter determines its aggregate stiffness. Reporting only one of these quantities would therefore be insufficient to determine whether symmetry compilation produces a genuinely broader basin within the physical directions that remain.

\begin{figure*}[t]
    \centering
    \includegraphics[width=\textwidth]{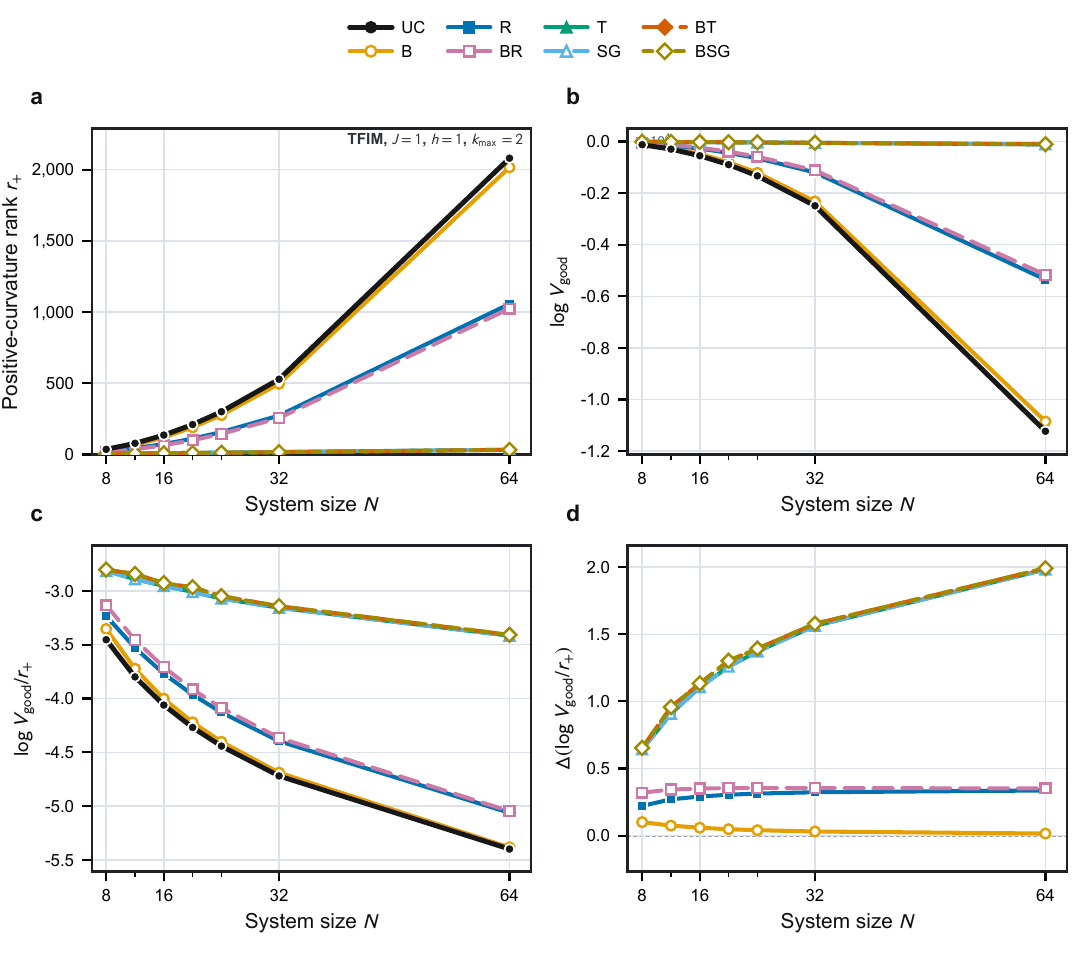}
    \caption{\textbf{Spectral decomposition of the local target basin.} \textbf{a}, Positive-curvature rank $r_+$ obtained from the resolved positive eigenvalues of the Fubini--Study-normalized curvature operator $K_*$ defined in Eq.~\eqref{eq:normalized_curvature}. \textbf{b}, Logarithm of the local good volume $V_{{\rm good},+}^{\rm FS}$ appearing in Theorem~\ref{thm:good_volume}. \textbf{c}, Rank-normalized log volume $\log V_{{\rm good},+}^{\rm FS}/r_+=\log R_\epsilon$, where $R_\epsilon$ is the characteristic basin scale defined in Eq.~\eqref{eq:characteristic_basin_scale}. \textbf{d}, Increase $\Delta_{\rm UC}\log R_\epsilon$ relative to the unconstrained ansatz at the same system size. Results are shown for the critical TFIM with $J=h=1$ and $k_{\max}=2$. Labels: UC (unconstrained); B (bitflip); R (reflection); BR (bitflip + reflection); T (translations); SG (space group); BT (bitflip + translations); BSG (bitflip + space group).}
    \label{fig:app-local-hessian-diagnostics}
\end{figure*}

Figure~\ref{fig:app-local-hessian-diagnostics}a verifies that the reduction produced by spatial symmetry compilation extends to the target-confined physical dimension $r_+$ and is not limited to the nominal number of trainable coordinates. The unconstrained and bit-flip-only ansatz develop a rapidly increasing number of resolved confining directions, while translation-containing constraints retain a much smaller positive-curvature subspace. This behavior is consistent with the hierarchy $r_+\leq r_{\rm uncon/con}\leq|\vec\theta|$ established in Eq.~\eqref{eq:dimension_hierarchy}. It shows that orbit tying changes the dimension of the locally confined physical basin rather than merely relabeling the original coordinate space.

Panel~b displays the corresponding absolute local volume from Theorem~\ref{thm:good_volume}. The large separation between symmetry sectors is expected because $V_{{\rm good},+}^{\rm FS}$ contains the explicit factor $(2\epsilon)^{r_+/2}$. Consequently, the raw volume in panel~b must not be interpreted by itself as evidence that the retained directions are broader. Ansätze with different $r_+$ assign volumes to spaces of different intrinsic dimension, and the tolerance factor can dominate their ratio.

Panels~c and d provide the rank-aware diagnostic required to interpret the main geometric results. By Eq.~\eqref{eq:characteristic_basin_scale}, $\log V_{{\rm good},+}^{\rm FS}/r_+$ is exactly $\log R_{\epsilon}$. Taking the $r_+$th root converts the local good volume into a volume-equivalent linear scale and permits comparisons among ansätze with different positive-curvature ranks. The translation-containing sectors retain substantially larger values of $R_{\epsilon}$ than the unconstrained and bit-flip-only sectors, and their separation from the unconstrained ansatz increases with system size. The larger useful-volume fractions reported in Fig.~\ref{fig:useful_expressibility_epsilon} therefore cannot be attributed solely to a reduction in $r_+$. The retained confining directions also support a larger characteristic basin scale, consistent with the direct results in Fig.~\ref{fig:characteristic_basin_width_epsilon}.

This diagnostic establishes the local part of the main claim. It does not determine the global concentration $f_\epsilon$, because $R_\epsilon$ contains no information about the total reachable volume $\operatorname{Vol}_{\rm FS}(\mathcal M_\Theta)$ appearing in Definition~\ref{def:trainability_metric}. The global and local statements become simultaneous only when Fig.~\ref{fig:app-local-hessian-diagnostics} is read together with the main useful-expressibility figure.

\subsection{Compression of the physical tangent space}
\label{app:fs-tangent-diagnostics}

The next diagnostic tests whether the parameter reduction survives projection onto physically distinguishable state variations. The pullback Fubini--Study metric $S=\operatorname{Re}[J^\dagger\Pi_\perp J]$ defined in Eq.~\eqref{eq:defn_S_metric} vanishes along parameter directions that do not generate a first-order physical displacement. Its resolved rank therefore measures the local physical tangent dimension rather than the raw coordinate dimension~\cite{haug2021capacity,dash2025efficiency}.

\begin{figure*}[t]
    \centering
    \includegraphics[width=\textwidth]{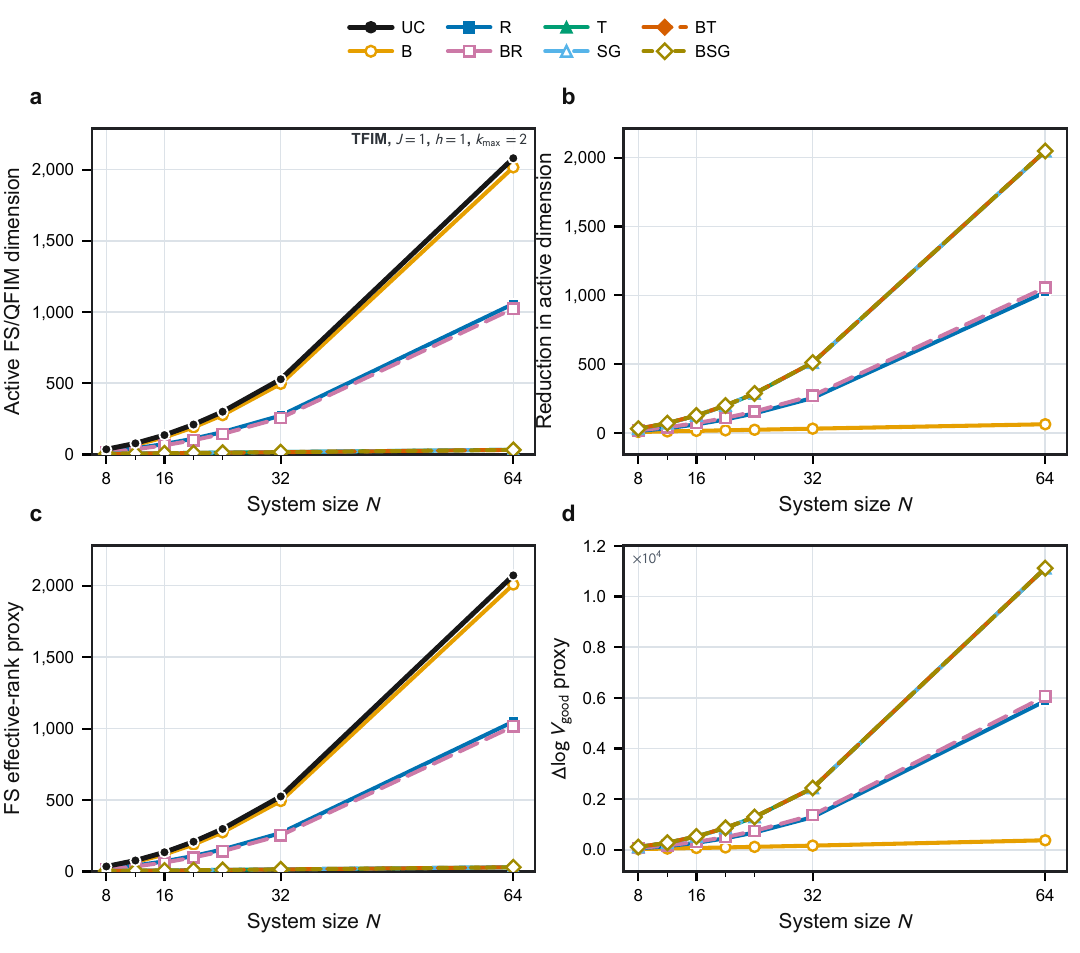}
    \caption{\textbf{Information-geometric diagnostics of the reachable tangent space.} \textbf{a}, Resolved active dimension of the Fubini--Study metric, estimating the rank of the physical tangent space defined by Eq.~\eqref{eq:defn_S_metric}. \textbf{b}, Reduction in active dimension relative to the unconstrained ansatz at the same system size. \textbf{c}, Spectral effective-rank proxy for the Fubini--Study metric. \textbf{d}, Unconstrained-relative change in the sampled local log-volume proxy $\Delta_{\rm UC}\log\widehat V_{\rm good}$. The quantities in panels c and d are scalable spectral or sampling proxies and are not substituted for the exact definitions of $f_\epsilon$ or $R_\epsilon$. Results are shown for the critical TFIM with $J=h=1$ and $k_{\max}=2$. Labels: UC (unconstrained); B (bitflip); R (reflection); BR (bitflip + reflection); T (translations); SG (space group); BT (bitflip + translations); BSG (bitflip + space group).}
    \label{fig:app-fs-tangent-diagnostics}
\end{figure*}

Figure~\ref{fig:app-fs-tangent-diagnostics}a shows that the symmetry hierarchy persists after the coordinate Jacobian is projected onto the physical tangent space. The active Fubini--Study dimension grows most rapidly for the unconstrained and bit-flip-only ansätze, is reduced by reflection-based constraints, and remains smallest under translation-containing constraints. Panel~b makes the corresponding removal of active physical directions explicit. The agreement between the coordinate compression and the Fubini--Study rank reduction rules out the interpretation that symmetry compilation merely replaces a large coordinate representation with an equally large physical manifold written in different variables.

Panel~c provides a complementary spectral diagnostic. The effective-rank proxy provides a continuous measure of the number of appreciably weighted spectral directions rather than counting every mode above a fixed numerical threshold~\cite{roy2007effective}. It reproduces the same broad separation between unconstrained, reflection-based, and translation-containing sectors. The agreement between thresholded and spectrally weighted notions of dimension shows that the observed hierarchy is not produced only by a particular rank cutoff.

Panel~d propagates the sampled tangent-space information into a local volume proxy. Its increase under stronger spatial constraints is consistent with the mechanism exposed by Theorem~\ref{thm:good_volume}, since reducing the number of active directions and changing their spectral weights both affect the estimated local volume. This panel is nevertheless a mechanistic diagnostic rather than a direct evaluation of Definition~\ref{def:trainability_metric}. It does not contain the full global Fubini--Study denominator in Eq.~\eqref{eq:total_fs_volume}, and it does not replace the rank-normalized basin comparison supplied by $R_\epsilon$.

The conclusion supported by Fig.~\ref{fig:app-fs-tangent-diagnostics} is therefore deliberately limited. Symmetry compilation reduces the physically active tangent dimension and changes the distribution of Fubini--Study spectral weight. The stronger conclusion that this compressed physical manifold is more efficiently concentrated around the target follows from the normalized $f_\epsilon$ values reported in the main text.

\subsection{Stability of the geometric diagnostics during optimization}
\label{app:training-checkpoint-diagnostics}

The construction in Theorem~\ref{thm:good_volume} is local and assumes a stationary physical minimum with no resolved negative-curvature modes. We therefore examine the geometric estimators at multiple points along the optimization trajectory. This test determines whether the endpoint values used in the main text are part of a stable late-training regime or arise from an isolated terminal fluctuation.

\begin{figure*}[t]
    \centering
    \includegraphics[width=\textwidth]{figs/training_geometry_checkpoint_4panel.pdf}
    \caption{\textbf{Evolution of geometric diagnostics during training.} \textbf{a}, Smoothed sampled good-volume fraction at the start, middle, and end of optimization. \textbf{b}, Local log-volume proxy normalized by the raw trainable dimension, $\log\widehat V_{\rm good}/d$. This panel is a trajectory-stability diagnostic and must not be identified with $\log R_\epsilon$, whose normalization uses $r_+$ rather than $d$. \textbf{c}, Energy-gradient norm per square root of the trainable dimension, $\|\nabla E\|_2/\sqrt d$. \textbf{d}, Fubini--Study stable-rank estimate. Colored markers and lines show medians, vertical bars show interquartile ranges, and crosses show means over seeds. Labels: UC (unconstrained); B (bitflip); R (reflection); BR (bitflip + reflection); T (translations); SG (space group); BT (bitflip + translations); BSG (bitflip + space group).}
    \label{fig:app-training-checkpoint-diagnostics}
\end{figure*}

The reduction of $\|\nabla E\|_2/\sqrt d$ in Fig.~\ref{fig:app-training-checkpoint-diagnostics}c confirms that the middle and final checkpoints lie substantially closer to stationarity than the initial configurations. This behavior supports the use of endpoint Hessian geometry in Theorem~\ref{thm:good_volume}. A small sampled gradient does not mathematically prove exact stationarity, so this panel should be understood as a numerical consistency check rather than as a replacement for the stationary-point assumption.

Panel~d shows that the Fubini--Study spectrum changes substantially during optimization. The stable-rank estimate falls between the initial and middle checkpoints and then changes more gradually toward the endpoint. This observation justifies evaluating the metric and curvature at the optimized state rather than assuming that the geometry of the initialization remains representative throughout training. It also shows that the separation between symmetry sectors is already present before the final iteration and is not generated by a discontinuous endpoint operation.

\begin{figure*}[t]
    \centering
    \includegraphics[width=\textwidth]{figs/fs_proxy_scaling_4panel.pdf}
    \caption{\textbf{Scaling and consistency of sampled Fubini--Study proxies.} \textbf{a}, Number of trainable parameters $d$. \textbf{b}, Diagonal effective-rank estimate obtained from the sampled Fubini--Study metric. \textbf{c}, Fubini--Study stable-rank estimate, which weights the tangent spectrum according to the number of appreciably contributing modes. \textbf{d}, Difference $\Delta_{\rm UC}\widehat f_{\rm good}$ between the sampled local good fraction of each constrained ansatz and the unconstrained ansatz at the same system size. Error bars report the sampling uncertainty over seeds. The quantity $\widehat f_{\rm good}$ is a local Monte Carlo diagnostic and is distinct from the globally normalized $f_\epsilon$ in Definition~\ref{def:trainability_metric}. Labels: UC (unconstrained); B (bitflip); R (reflection); BR (bitflip + reflection); T (translations); SG (space group); BT (bitflip + translations); BSG (bitflip + space group). Labels: UC (unconstrained); B (bitflip); R (reflection); BR (bitflip + reflection); T (translations); SG (space group); BT (bitflip + translations); BSG (bitflip + space group).}
    \label{fig:app-sampled-fs-proxies}
\end{figure*}

The sampled good-fraction and coordinate-normalized local-volume diagnostics in panels a and b do not exhibit an abrupt terminal jump. Their endpoint values instead continue the behavior observed at the intermediate checkpoint. This continuity is the relevant diagnostic conclusion. Panel~b is intentionally not used as a cross-ansatz basin-width measure because division by $d$ does not implement the physical normalization in Eq.~\eqref{eq:characteristic_basin_scale}. The main comparison across symmetry sectors remains $\log R_\epsilon=\log V_{{\rm good},+}^{\rm FS}/r_+$, as reported in Fig.~\ref{fig:app-local-hessian-diagnostics}c and in the main text.

For the strict application of Theorem~\ref{thm:good_volume}, the endpoint spectrum must additionally contain no resolved negative-curvature modes. Only endpoints satisfying this criterion should enter the reported local-volume and characteristic-scale statistics.

\subsection{Scalable proxy and estimator diagnostics}
\label{app:sampled-proxy-diagnostics}

The exact construction, storage, and spectral decomposition of the Fubini--Study metric and curvature operators become increasingly expensive as the number of variational coordinates grows, motivating scalable stochastic or reduced spectral estimators~\cite{RBM_Geom_learning,dash2025efficiency}. We therefore compare several lower-cost estimators of the physical dimension and sampled local good fraction. These quantities are used to test the scalability and numerical consistency of the geometric pipeline. Therefore, they are not promoted to independent definitions of target-aware useful expressibility.

Panels~a and b of Fig.~\ref{fig:app-sampled-fs-proxies} show that the diagonal Fubini--Study estimator recovers the expected compression hierarchy. The unconstrained and bit-flip-only sectors retain the largest effective dimensions, reflection-based sectors form an intermediate group, and translation-containing constraints retain only a small orbit-count-controlled set of directions. The close correspondence with the trainable dimension confirms that the sampled diagonal metric detects the structural orbit tying implemented in the parameterization.

The stable-rank estimate in panel~c is substantially smaller than either the raw coordinate count or the diagonal effective rank. This difference is expected because the stable rank weights the full spectrum continuously relative to its dominant mode, so directions carrying negligible spectral weight contribute only weakly~\cite{ipsen2025stable}. Its importance is diagnostic. Although the numerical magnitude depends on the estimator, the same three broad symmetry regimes remain visible. The physical compression hierarchy is therefore not an artifact of counting all diagonal entries as equally active.

Panel~d serves as a negative-control test. Within the reported uncertainties, the sampled local good fraction is generally comparable to the unconstrained baseline even after severe spatial compression. The figure therefore shows that the scalable sampling procedure does not detect a systematic collapse of the locally acceptable fraction when translation or space-group constraints are imposed. It should not be described as reproducing the large global enhancement of $f_\epsilon$ in the main text. The quantity plotted here is a sampled local fraction, whereas Definition~\ref{def:trainability_metric} normalizes the connected target-accurate Fubini--Study volume by the total reachable physical volume of the ansatz. The strong enhancement in the main figure arises only after this global denominator and the local curvature construction are placed on the same physical footing.

The four diagnostics establish a consistent chain of evidence behind the main target-aware geometric result. First, the reduction in trainable parameters survives projection onto the Fubini--Study tangent space, showing that symmetry compilation changes the physically reachable local manifold rather than only its coordinate representation. Second, the local Hessian decomposition shows that the increase in the good-volume fraction is not generated solely by the smaller positive-curvature rank. The characteristic scale $R_\epsilon$ remains larger for translation-containing constraints after the good volume is converted to a rank-normalized linear quantity. Third, the checkpoint analysis shows that the endpoint geometry is reached continuously as the gradient decreases and is not produced by an isolated terminal fluctuation. Fourth, diagonal-rank, stable-rank, and sampled-volume proxies preserve the same broad structural hierarchy and do not reveal a systematic loss of locally acceptable configurations under strong symmetry compression.

\end{document}